\theoremstyle{definition}
\newtheorem{theorem}{Theorem}
\newtheorem{lemma}{Lemma}
\newtheorem{corollary}{Corollary}
\newtheorem{definition}{Definition}
\newtheorem{remark}{Remark}
\newtheorem{example}{Example}
\title{Near-Optimal Vector Linear Index Codes For Single Unicast Index Coding Problems with Symmetric Neighboring Interference}
\begin{document}

\author{Mahesh~Babu~Vaddi~and~B.~Sundar~Rajan\\ 
 Department of Electrical Communication Engineering, Indian Institute of Science, Bengaluru 560012, KA, India \\ E-mail:~\{mahesh,~bsrajan\}@ece.iisc.ernet.in }
 
\maketitle
\begin{abstract}
A single unicast index coding problem (SUICP) with symmetric neighboring interference (SNI) has equal number of $K$ messages and $K$ receivers, the $k$th receiver $R_{k}$ wanting the $k$th message $x_{k}$ and having the side-information $\mathcal{K}_{k}=(\mathcal{I}_{k} \cup x_{k})^c,$ where ${I}_k= \{x_{k-U},\dots,x_{k-2},x_{k-1}\}\cup\{x_{k+1}, x_{k+2},\dots,x_{k+D}\}$ is the interference with $D$ messages after and $U$ messages before its desired message. Maleki, Cadambe and Jafar obtained the capacity of this single unicast index coding problem with symmetric neighboring interference (SUICP-SNI) with $K$ tending to infinity and Blasiak, Kleinberg and Lubetzky for the special case of $(D=U=1)$ with $K$ being finite. In our previous work, we proved the capacity of SUICP-SNI for arbitrary $K$ and $D$ with $U=\text{gcd}(K,D+1)-1$. This paper deals with near-optimal linear code construction for SUICP-SNI  with arbitrary $K,U$ and $D.$  For SUICP-SNI with arbitrary $K,U$ and $D$, we define a set of $2$-tuples such that for every $(a,b)$ in that set the rate $D+1+\frac{a}{b}$ is achieved by using vector linear index codes over every field. We prove that the set $\mathcal{\mathbf{S}}$ consists of $(a,b)$ such that the rate of constructed vector linear index codes are at most $\frac{K~\text{mod}~(D+1)}{\left \lfloor \frac{K}{D+1} \right \rfloor}$ away from a known lower bound on broadcast rate of SUICP-SNI. The three known results on the exact capacity of the SUICP-SNI are recovered as special cases of our results. Also, we give a low complexity decoding procedure for the proposed vector linear index codes for the SUICP-SNI.
\end{abstract}
\section{Introduction and Background}
\label{sec1}
\IEEEPARstart {A}{n} index coding problem, comprises a transmitter that has a set of $K$ independent messages, $X=\{ x_0,x_1,\ldots,x_{K-1}\}$, and a set of $M$ receivers, $R=\{ R_0,R_1,\ldots,R_{M-1}\}$. Each receiver, $R_k=(\mathcal{K}_k,\mathcal{W}_k)$, knows a subset of messages, $\mathcal{K}_k \subset X$, called its \textit{Known-set} or the \textit{side-information}, and demands to know another subset of messages, $\mathcal{W}_k \subseteq \mathcal{K}_k^\mathsf{c}$, called its \textit{Want-set} or \textit{Demand-set}. The transmitter can take cognizance of the side-information of the receivers and broadcast coded messages, called the index code, over a noiseless channel. The objective is to minimize the number of coded transmissions, called the length of the index code, such that each receiver can decode its demanded message using its side-information and the coded messages.

The problem of index coding with side-information was introduced by Birk and Kol \cite{ISCO}. Ong and Ho \cite{OnH} classified the binary index coding problem depending on the demands and the side-information possessed by the receivers. An index coding problem is unicast if the demand-sets of the receivers are disjoint. An index coding problem is single unicast if the demand-sets of the receivers are disjoint and the cardinality of demand-set of every receiver is one. Any unicast index coding problem can be converted into a single unicast index coding problem. A single unicast index coding problem (SUICP) can be described as follows: Let $\{x_{0}$,$x_{1}$,\ldots,$x_{K-1}\}$ be the $K$ messages, $\{R_{0}$,$R_{1},\ldots,R_{K-1}\}$ are $K$ receivers and $x_k \in \mathcal{A}$ for some alphabet $\mathcal{A}$ and $k=0,1,\ldots,K-1$. Receiver $R_{k}$ wants the message $x_{k}$ and knows a subset of messages in $\{x_{0}$,$x_{1}$,\ldots,$x_{K-1}\}$ as side-information. 


A solution (includes both linear and nonlinear) of the index coding problem must specify a finite alphabet $\mathcal{A}_P$ to be used by the transmitter, and an encoding scheme $\varepsilon:\mathcal{A}^{t} \rightarrow \mathcal{A}_{P}$ such that every receiver is able to decode the wanted message from $\varepsilon(x_0,x_1,\ldots,x_{K-1})$ and the known information. The minimum encoding length $l=\lceil log_{2}|\mathcal{A}_{P}|\rceil$ for messages that are $t$ bit long ($\vert\mathcal{A}\vert=2^t$) is denoted by $\beta_{t}(G)$. The broadcast rate of the index coding problem with side-information graph $G$ is defined \cite{ICVLP} as,
$\beta(G) \triangleq   \inf_{t} \frac{\beta_{t}(G)}{t}.$
If $t = 1$, it is called scalar broadcast rate. For a given index coding problem, the broadcast rate $\beta(G)$ is the minimum number of index code symbols required to transmit to satisfy the demands of all the receivers. The capacity $C(G)$ for the index coding problem is defined as the maximum number of message symbols transmitted per index code symbol such that every receiver gets its wanted message symbols and all the receivers get equal number of wanted message symbols. The broadcast rate and capacity are related as 
\begin{center}	
$C(G)=\dfrac{1}{\beta(G)}$.
\end{center}  

Instead of one transmitter and $K$ receivers, the SUICP-SNI can also be viewed as $K$ source-receiver pairs with all $K$ sources connected with all $K$ receivers through a common finite capacity channel and all source-receiver pairs connected with either zero of infinite capacity channels. This problem is called multiple unicast index coding problem in \cite{MCJ}.

\subsection{Single Unicast Index Coding Problem with symmetric Neighboring Interference}
A single unicast index coding problem with symmetric neighboring interference (SUICP-SNI) with equal number of $K$ messages and receivers, is one with each receiver having a total of $U+D<K$ interference, corresponding to the $D~(U \leq D)$ messages above and $U$ messages before its desired message. In this setting, the $k$th receiver $R_{k}$ demands the message $x_{k}$ having the interference
\begin{equation}
\label{antidote}
{I}_k= \{x_{k-U},\dots,x_{k-2},x_{k-1}\}\cup\{x_{k+1}, x_{k+2},\dots,x_{k+D}\}, 
\end{equation}
\noindent
the side-information being 
$\mathcal{K}_{k}=(\mathcal{I}_{k} \cup x_{k})^c.$

Maleki \textit{et al.} \cite{MCJ} found the capacity of SUICP-SNI with $K\rightarrow \infty$ to be  
\begin{align}
\label{cap1}
C=\frac{1}{D+1},
\end{align}
and an upper bound for the capacity of SUICP-SNI for finite $K$ to be 
\begin{align}
\label{outerbound}
C \leq \frac{1}{D+1},
\end{align}
which is same as the broadcast rate of the SUICP-SNI being lower bounded as
\begin{align}
\label{cap4}
\beta \geq D+1.
\end{align}

Blasiak \textit{et al.} \cite{ICVLP} found the capacity of SUICP-SNI with $U=D=1$ by using linear programming bounds to be 
\begin{align}
\label{cap2}
C=\frac{\left\lfloor \frac{K}{2}\right\rfloor}{K}. 
\end{align}

In \cite{VaR4}, we showed that the capacity of SUICP-SNI for arbitrary $K$ and $D$ with $U=\text{gcd}(K,D+1)-1$ is 
\begin{align}
\label{cap3}
C=\frac{1}{D+1}.
\end{align}

In \cite{VaR2}, we constructed binary matrices of size $m \times n (m\geq n)$ such that any $n$ adjacent rows of the matrix are linearly independent over every field. Calling these matrices as Adjacent Independent Row (AIR) matrices, we gave an optimal scalar linear index code for the single unicast index coding problems with symmetric neighboring consecutive side information (SUICP-SNC)(one-sided)  using AIR encoding matrices. In \cite{VaR1}, we constructed optimal vector linear index codes for  SUICP-SNC (two-sided) In \cite{VaR3}, we gave a low-complexity decoding for SUICP-SNC with AIR matrix as encoding matrix. The low complexity decoding method helps to identify a reduced set of side-information for each user with which the decoding can be carried out. By this method every receiver is able to decode its wanted message symbol by simply adding some index code symbols (broadcast symbols).

\subsection{Contributions}
In this paper we deal with arbitrary $K,D$ and $U$ for SUICP-SNI. The contributions of this paper are summarized below:
\begin{itemize}
\item For SUICP-SNI with arbitrary $K,U$ and $D$, we define a set $\mathcal{\mathbf{S}}$ of $2$-tuples such that for every $(a,b) \in \mathcal{\mathbf{S}}$, the rate $D+1+\frac{a}{b}$ is achievable by using AIR matrices with vector linear index codes over every field.
\item We prove that the rates achieved by AIR matrices coincide with the existing results on the exact value of the capacity of SUICP-SNI given in \eqref{cap1},\eqref{cap2} and \eqref{cap3} in the respective settings. 
\item We prove that the set $\mathcal{\mathbf{S}}$ consists of $(a,b)$ such that the rate of constructed vector linear index codes are at most $\frac{K~\text{mod}~(D+1)}{\left \lfloor \frac{K}{D+1} \right \rfloor}$ away from a known lower bound on broadcast rate of SUICP-SNI.
\item We give a low complexity decoding for the proposed vector linear index codes for the SUICP-SNI. The low complexity decoding method helps to identify a reduced set of side-information for each user with which the decoding can be carried out. By this method every receiver is able to decode its wanted message symbol by simply adding some index code symbols (broadcast symbols).

\end{itemize}

For a subset $I=\{i_1,i_2,\ldots,i_l\} \subseteq \{1,2,\ldots,K\}$, let $x_I=\{x_{i_1},x_{i_2},\ldots,x_{i_l}\}$ and $L_I=\{L_{i_1},L_{i_2},\ldots,L_{i_l}\}$. For vector subspaces $\mathbf{S}_{i_1},\mathbf{S}_{i_2},\ldots,\mathbf{S}_{i_l}$, let $\mathbf{S}_I=\{\mathbf{S}_{i_1},\mathbf{S}_{i_2},\ldots,\mathbf{S}_{i_l}\}$ and $<\mathbf{S}_I>=<\mathbf{S}_{i_1},\mathbf{S}_{i_2},\ldots,\mathbf{S}_{i_l}>$ denote the vector subspace spanned by all vectors present in $\mathbf{S}_{i_1},\mathbf{S}_{i_2},\ldots,\mathbf{S}_{i_l}$. For matrices $\mathbf{\tilde{S}}_{i_1},\mathbf{\tilde{S}}_{i_2},\ldots,\mathbf{\tilde{S}}_{i_l}$, let $\mathbf{\tilde{S}}_I=\{\mathbf{\tilde{S}}_{i_1},\mathbf{\tilde{S}}_{i_2},\ldots,\mathbf{\tilde{S}}_{i_l}\}$. 

Henceforth,  we refer SUICP-SNI with $K$ messages, $D$ interfering messages after and $U$ interfering messages before the desired message as $(K,D,U)$ SUICP-SNI.

The remaining part of this paper is organized as follows. In Section \ref{sec2}, for $(K,D,U)$ SUICP-SNI, we define a set $\mathbf{S}$ of $2$-tuples such that for every $(a,b) \in \mathcal{\mathbf{S}}$, the rate $D+1+\frac{a}{b}$ is achievable by using AIR matrices with vector linear index codes over every field (Theorem \ref{thm1}).  Also we prove that the rates achieved by AIR Matrices coincide with the existing results on the exact value of the capacity of SUICP-SNI given in \eqref{cap1},\eqref{cap2} and \eqref{cap3} in Theorem \ref{thm2}. Moreover, it is shown that the set $\mathcal{\mathbf{S}}$ consists of $(a,b)$ such that the rate of constructed vector linear index codes are at most $\frac{K~\text{mod}~(D+1)}{\left \lfloor \frac{K}{D+1} \right \rfloor}$ away from a known lower bound on broadcast rate of SUICP-SNI. In Section \ref{sec3}, we give a low complexity decoding for the proposed vector linear index codes. We conclude the paper in Section \ref{sec4}.

All the subscripts in this paper are to be considered $~\text{\textit{modulo}}~ K$. 
\section{Vector Linear Index Codes of SUICP-SNI: Achievability Results}
\label{sec2}
In this section, we define a set $\mathcal{\mathbf{S}}$ consisting of pairs of integers $(a,b)$ and prove that the rate $D+1+\frac{a}{b}$ for every $(a,b) \in \mathcal{\mathbf{S}}$ is achievable by using an appropriate sized AIR matrix as the encoding matrix.

\begin{definition}
\label{def1}
Consider the SUICP-SNI with $K$ messages, $D$ and $U$ interfering messages after and before  the desired message. For this SUICP-SNI, define the set $\mathbf{S}_{K,D,U}$ as  
\begin{align}
\label{ab}
\mathbf{S}_{K,D,U}=\{(a,b):\text{gcd}(bK,b(D+1)+a)\geq b(U+1)\}
\end{align}
for $a \in Z_{\geq 0}$ and $b \in Z_{>0}$. 
\end{definition}
\subsection{AIR matrices}
The matrix obtained by Algorithm \ref{algo2} is called the $(m,n)$ AIR matrix and it is denoted by $\mathbf{L}_{m\times n}.$ The general form of the $(m,n)$ AIR matrix is shown in   Fig. \ref{fig1}. It consists of several submatrices (rectangular boxes) of different sizes as shown in Fig.\ref{fig1}. The location and sizes of these submatrices and other quantities marked in the figure are used subsequently in Section \ref{sec3} to describe the low complexity decoding for $(K,D,U)$ SUICP-SNI by using AIR matrices.
\begin{figure*}[ht]
\centering
\includegraphics[scale=0.6]{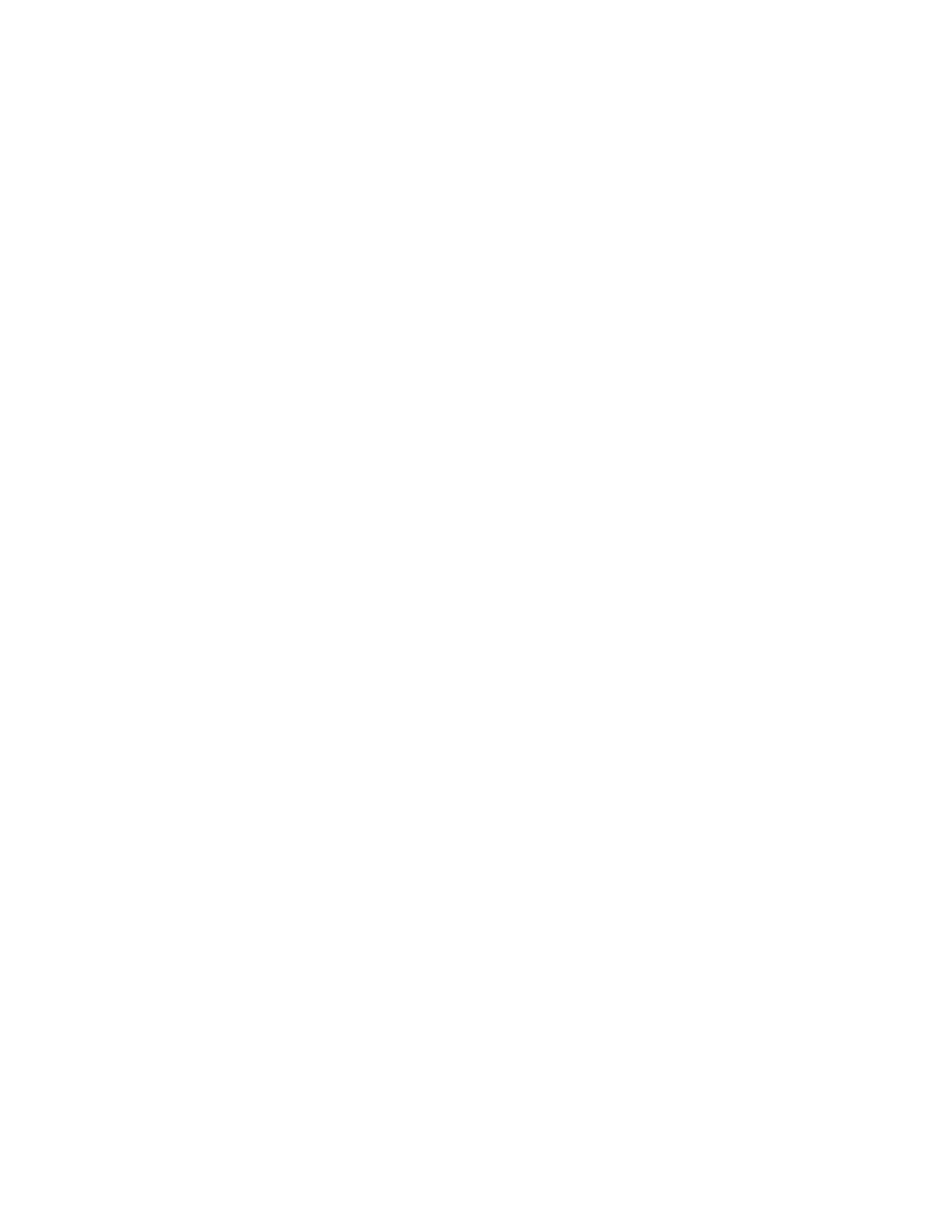}\\
~ $\mathbf{S}=\mathbf{I}_{\lambda_{l} \times \beta_l \lambda_{l}}$ if $l$ is even and ~$\mathbf{S}=\mathbf{I}_{\beta_l\lambda_{l} \times \lambda_{l}}$ otherwise.
\caption{AIR matrix of size $m \times n$.}
\label{fig1}
~ \\
\hrule
\end{figure*}
The description of the submatrices are as follows: Let $c$ and $d$ be two positive integers and $d$ divides $c$. The following matrix  denoted by $\mathbf{I}_{c \times d}$ is a rectangular  matrix.
\begin{align}
\label{rcmatrix}
\mathbf{I}_{c \times d}=\left.\left[\begin{array}{*{20}c}
   \mathbf{I}_{d}  \\
   \mathbf{I}_{d}  \\
   \vdots  \\
   \mathbf{I}_{d} 
   \end{array}\right]\right\rbrace \frac{c}{d}~\text{number~of}~ \mathbf{I}_{d}~\text{matrices}
\end{align}
and $\mathbf{I}_{d \times c}$ is the transpose of $\mathbf{I}_{c \times d}.$

Towards explaining the other quantities shown in the AIR matrix shown in Fig. \ref{fig1}, for a given $K,D$  and $U,$ let  $\lambda_{-1}=n,\lambda_0=m-n$ and\begin{align}
\nonumber
n&=\beta_0 \lambda_0+\lambda_1, \nonumber \\
\lambda_0&=\beta_1\lambda_1+\lambda_2, \nonumber \\
\lambda_1&=\beta_2\lambda_2+\lambda_3, \nonumber \\
\lambda_2&=\beta_3\lambda_3+\lambda_4, \nonumber \\
&~~~~~~\vdots \nonumber \\
\lambda_i&=\beta_{i+1}\lambda_{i+1}+\lambda_{i+2}, \nonumber \\ 
&~~~~~~\vdots \nonumber \\ 
\lambda_{l-1}&=\beta_l\lambda_l.
\label{chain}
\end{align}
where $\lambda_{l+1}=0$ for some integer $l,$ $\lambda_i,\beta_i$ are positive integers and $\lambda_i < \lambda_{i-1}$ for $i=1,2,\ldots,l$. The number of submatrices in the AIR matrix is $l+2$ and the size of each submatrix is shown using $\lambda_i,\beta_i,$  $i \in [0:l].$
For $(K,D,U)$ SUICP-SNI and every $(a,b)\in \mathbf{S}_{K,D,U}$, in Theorem \ref{thm1}, we prove that the AIR matrix of size $Kb \times (b(D+1)+a)$ can be used as an encoding matrix to generate $b$-dimensional vector linear index code with rate $D+1+\frac{a}{b}$.

		\begin{algorithm}
		{Algorithm to construct the AIR matrix $\mathbf{L}$ of size $m \times n$}
			\begin{algorithmic}[2]
				 \item Let $\mathbf{L}=m \times n$ blank unfilled matrix.
				\item [Step 1]~~~
				\begin{itemize}
				\item[\footnotesize{1.1:}] Let $m=qn+r$ for $r < n$.
				\item[\footnotesize{1.2:}] Use $\mathbf{I}_{qn \times n}$ to fill the first $qn$ rows of the unfilled part of $\mathbf{L}$.
				\item[\footnotesize{1.3:}] If $r=0$,  Go to Step 3.
				\end{itemize}

				\item [Step 2]~~~
				\begin{itemize}
				\item[\footnotesize{2.1:}] Let $n=q^{\prime}r+r^{\prime}$ for $r^{\prime} < r$.
				\item[\footnotesize{2.2:}] Use $\mathbf{I}_{q^{\prime}r \times r}^{\mathsf{T}}$ to fill the first $q^{\prime}r$ columns of the unfilled part of $\mathbf{L}$.
			    \item[\footnotesize{2.3:}] If $r^{\prime}=0$,  go to Step 3.	
				\item[\footnotesize{2.4:}] $m\leftarrow r$ and $n\leftarrow r^{\prime}$.
				\item[\footnotesize{2.5:}] Go to Step 1.
				\end{itemize}
				\item [Step 3] Exit.
		
			\end{algorithmic}
			\label{algo2}
		\end{algorithm}

In an $m$-dimensional vector linear index code, $x_k \in \mathbb{F}^m_q$ for $k \in [0:K-1]$. A $m$-dimensional vector linear index code of length $N$ is represented by an encoding matrix $\mathbf{L}$ $(\in \mathbb{F}^{Km\times N}_q)$, where the $j$th column contains the coefficients used for mixing the $m$-dimensional messages $x_0,x_1,\ldots,x_{K-1}$ to get the $j$th index code symbol. Let $L_0,L_1,\ldots, L_{Km-1}$ be the $Km$ rows of the encoding matrix $\mathbf{L}$. Let $\mathbf{\tilde{S}}_k$ be the $m \times N$ matrix 
$$\mathbf{\tilde{S}}_{k}=\left.\left[\begin{array}{*{20}c}
   L_{km}  \\
   L_{km+1}  \\
   \vdots  \\
  L_{km+m-1}   
   \end{array}\right]\right.$$
for $k \in [0:K-1]$. The $k$th matrix $\mathbf{\tilde{S}}_k$ $(\in \mathbb{F}^{m\times N}_q)$ contains the coefficients used for mixing the $m$-dimensional message $x_k$ in the $N$ index code symbols. A codeword of the index code is 
\begin{align*}
\mathbf{y}=[c_0~c_1~\ldots~c_{N-1}]=\mathbf{xL}=\sum_{i=0}^{K-1}x_i\mathbf{\tilde{S}}_k,
\end{align*}
where $\mathbf{x}=[x_{0,1}~x_{0,2}~\ldots~x_{0,m}~x_{1,1}~x_{1,2}~\ldots~x_{1,m}~\ldots~x_{K-1,m}]$.
\begin{lemma}
\label{lemma1}
Consider a $(K,D,U)$ SUICP-SNI. Let $\mathbf{L}$ be an $m$-dimensional vector encoding matrix (not necessarily an AIR matrix) of size $Km\times N$ for this index coding problem. Let the vector index code be generated by multiplying $Km$ message symbols $[x_{0,1}~x_{0,2}~\ldots~x_{0,m}~x_{1,1}~x_{1,2}~\ldots~x_{1,m}~\ldots~x_{K-1,m}]$ with the encoding matrix $\mathbf{L}$. Let $\mathbf{S}_k$ be the row space of $\tilde{\mathbf{S}}_k$ and $\mathbf{S}_{k \setminus i}$ be the subspace spanned by the $m-1$ rows in $\tilde{\mathbf{S}}_k$ after deleting the row $L_{km+i}$. The receiver $R_k$ can decode $x_{k,1},x_{k,2},\ldots,x_{k,m}$ if and only if 
\begin{align} 
\label{sind}
L_{km+i} \notin  < \mathbf{S}_{{\cal{I}}_k},\mathbf{S}_{k \setminus i} >
\end{align}
for $k\in [0:K-1]$, $i \in [0:m-1]$.
\end{lemma}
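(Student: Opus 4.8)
The plan is to reduce the decoding problem at $R_k$ to a linear-independence statement in two steps: first cancel all of $R_k$'s side information to obtain a residual linear combination, then analyse exactly when that residual determines $x_k$ uniquely. For the first step, since $R_k$ knows $x_j$ for every $j\in\mathcal{K}_k=(\mathcal{I}_k\cup x_k)^{\mathsf c}$, subtracting $\sum_{j\in\mathcal{K}_k}x_j\tilde{\mathbf S}_j$ from the received word $\mathbf y=\sum_{i=0}^{K-1}x_i\tilde{\mathbf S}_i$ leaves $R_k$ with $\mathbf z_k=x_k\tilde{\mathbf S}_k+\sum_{i\in\mathcal{I}_k}x_i\tilde{\mathbf S}_i\in\langle\mathbf S_{\mathcal{I}_k},\mathbf S_k\rangle$. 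Because the interference symbols $x_i$, $i\in\mathcal{I}_k$, are unknown to $R_k$ and range over all of $\mathbb{F}_q^m$, the receiver can recover $(x_{k,1},\dots,x_{k,m})$ from $\mathbf z_k$ if and only if no two choices of $(x_k,x_{\mathcal{I}_k})$ with distinct $x_k$ yield the same $\mathbf z_k$; taking the difference $\Delta=(\Delta_0,\dots,\Delta_{m-1})$ of the two values of $x_k$, this is equivalent to: for every nonzero $\Delta\in\mathbb{F}_q^m$, $\sum_{i=0}^{m-1}\Delta_i L_{km+i}\notin\langle\mathbf S_{\mathcal{I}_k}\rangle$.

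The second and principal step is to show that this condition holds for all nonzero $\Delta$ if and only if $L_{km+i}\notin\langle\mathbf S_{\mathcal{I}_k},\mathbf S_{k\setminus i}\rangle$ for every $i\in[0:m-1]$. For the forward implication I would argue contrapositively: if $L_{km+i}\in\langle\mathbf S_{\mathcal{I}_k},\mathbf S_{k\setminus i}\rangle$, write $L_{km+i}=w+\sum_{j\neq i}c_j L_{km+j}$ with $w\in\langle\mathbf S_{\mathcal{I}_k}\rangle$, so the nonzero vector $\Delta$ with $\Delta_i=1$ and $\Delta_j=-c_j$ satisfies $\sum_{i'=0}^{m-1}\Delta_{i'}L_{km+i'}=w\in\langle\mathbf S_{\mathcal{I}_k}\rangle$. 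Conversely, if some nonzero $\Delta$ satisfies $\sum_{i'}\Delta_{i'}L_{km+i'}=w\in\langle\mathbf S_{\mathcal{I}_k}\rangle$, choose an index $i$ with $\Delta_i\neq0$ and solve for $L_{km+i}$ to get $L_{km+i}\in\langle\mathbf S_{\mathcal{I}_k},\mathbf S_{k\setminus i}\rangle$. Combining the two equivalences and quantifying over all $k\in[0:K-1]$ gives \eqref{sind}.

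The genuinely delicate point is this second equivalence: one must observe that a \emph{nontrivial} linear combination of the rows of $\tilde{\mathbf S}_k$ falling into the interference span $\langle\mathbf S_{\mathcal{I}_k}\rangle$ is precisely what obstructs joint recovery of all $m$ symbols $x_{k,1},\dots,x_{k,m}$, and that a single offending index $i$ already destroys decodability — which is why condition \eqref{sind} must be imposed for every $i$ and is stated row-by-row rather than merely for the individual rows $L_{km+i}$. The remaining ingredients (cancellation of side information, and the fact that over a field "decodable by some function" is equivalent to injectivity of the residual map, and hence to the stated linear-algebra condition) are routine. Note also that no structure particular to the neighboring-interference pattern, nor to AIR matrices, is used, consistent with the lemma being asserted for an arbitrary $m$-dimensional encoding matrix $\mathbf L$.
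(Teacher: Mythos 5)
Your proof is correct and follows essentially the same route as the paper: cancel the known side information to obtain the residual $\mathbf z=x_k\tilde{\mathbf S}_k+x_{\mathcal I_k}\tilde{\mathbf S}_{\mathcal I_k}$, then characterize decodability as a linear-independence condition of the rows of $\tilde{\mathbf S}_k$ modulo $\langle\mathbf S_{\mathcal I_k}\rangle$. Your second step in fact spells out, more carefully than the paper does, the equivalence between the row-by-row condition \eqref{sind} and the statement that no nontrivial combination $\sum_i\Delta_iL_{km+i}$ lies in the interference span — a point the paper essentially asserts (via the claims $\mathbf S_k\cap\langle\mathbf S_{\mathcal I_k}\rangle=\{0\}$ and $L_{km+i}\notin\langle\mathbf S_{k\setminus i}\rangle$) without proof.
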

\begin{proof}
The $m$-dimensional vector index codeword $\mathbf{y}$ can be written as
\begin{align}
\nonumber
\mathbf{y}&=x_{{\cal{K}}_k}\mathbf{\tilde{S}}_{{\cal{K}}_k}+x_k\mathbf{\tilde{S}}_k+x_{{\cal{I}}_k}\mathbf{\tilde{S}}_{{\cal{I}}_k}, \text{ or}
\\ \mathbf{z}&=\mathbf{y}-x_{{\cal{K}}_k}\mathbf{\tilde{S}}_{{\cal{K}}_k}=x_k\mathbf{\tilde{S}}_k+x_{\mathcal{I}_k}\mathbf{\tilde{S}}_{\mathcal{I}_k},
\label{sdecoding}
\end{align}
where $\mathbf{z}$ can be computed by $R_k$ using its side-information $x_{{{\cal{K}}_k}}$.

Assume that \eqref{sind} is satisfied for $k$. This implies that
\begin{align*}
\mathbf{S}_k \cap <\mathbf{S}_{k-U},\mathbf{S}_{k-U+1},\ldots,\mathbf{S}_{k-1},\mathbf{S}_{k+1},\ldots,\mathbf{S}_{k+D}>=\phi.
\end{align*}

Hence, $\mathbf{z}$ can be expressed as the following linear combination
\begin{align*}
\mathbf{z}=&a_k\mathbf{\tilde{S}}_{k}+a_{1}\mathbf{\tilde{S}}_{k-1}+a_{2}\mathbf{\tilde{S}}_{k-2}+\ldots+a_{k-D}\mathbf{\tilde{S}}_{k-D} 
\\&+a_{k+1}\mathbf{\tilde{S}}_{k+1}+a_{k+2}\mathbf{\tilde{S}}_{k+2}+\ldots+a_{k+D}\mathbf{\tilde{S}}_{k+D},
\end{align*}
where $a_k \in \mathbb{F}^m_q$ is unique and $x_k=a_k$. The receiver $R_k$ decodes all $m$ message symbols in the message $x_k$ follows from the fact $L_{(k-1)m+i} \notin  < \mathbf{S}_{k \setminus i} >$.

If, on the contrary, $\mathbf{S}_{k}$ does not satisfy \eqref{sind}, then $a_{k}$ will no longer be unique and consequently $x_{k}$ can not be decoded by $R_k$. This implies that $\mathbf{L}$ is not an index code encoding matrix which contradicts the assumption. This completes the proof.
\end{proof}
\begin{corollary}
\label{cor1}
Consider a $(K,D,U)$ SUICP-SNI. Let $\mathbf{L}$ be an scalar linear encoding matrix (not necessarily an AIR matrix) of size $K \times N$ for this index coding problem. The receiver $R_k$ can decode $x_k$ if and only if 
\begin{align} 
\label{sind2}
L_{k} \notin  < L_{{\cal{I}}_k}>
\end{align}
for $k\in [0:K-1]$.
\end{corollary}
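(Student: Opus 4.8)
The plan is to obtain Corollary \ref{cor1} as the $m=1$ specialization of Lemma \ref{lemma1}. A scalar linear encoding matrix of size $K\times N$ is precisely an $m$-dimensional vector encoding matrix with $m=1$, so Lemma \ref{lemma1} applies verbatim; all that remains is to unwind the objects appearing in the condition \eqref{sind} when $m=1$ and check that it reduces to \eqref{sind2}.

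First I would note that when $m=1$ the matrix $\mathbf{\tilde{S}}_k$ consists of the single row $L_{k}$ (its rows are $L_{km},L_{km+1},\ldots,L_{km+m-1}$, which for $m=1$ is just $L_{k}$), the index $i$ in Lemma \ref{lemma1} ranges only over $i=0$, and $L_{km+i}=L_{k}$. Consequently $\mathbf{S}_k$, the row space of $\mathbf{\tilde{S}}_k$, equals the line $<L_k>$, while $\mathbf{S}_{k\setminus 0}$, defined as the span of the $m-1=0$ remaining rows of $\mathbf{\tilde{S}}_k$, is the trivial subspace $\{0\}$. Therefore $<\mathbf{S}_{{\cal{I}}_k},\mathbf{S}_{k\setminus 0}>\ =\ <L_{{\cal{I}}_k}>$, and condition \eqref{sind} collapses to $L_{k}\notin\ <L_{{\cal{I}}_k}>$, which is exactly \eqref{sind2}. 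Since Lemma \ref{lemma1} is an ``if and only if'' statement, this yields the ``if and only if'' statement of Corollary \ref{cor1} directly, for every $k\in[0:K-1]$.

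There is essentially no obstacle here, as there is no computation to carry out beyond this substitution. The single point that needs a line of care is verifying that the ``deletion'' term $\mathbf{S}_{k\setminus i}$ is vacuous when $m=1$, so that the joint span in \eqref{sind} reduces cleanly to $<L_{{\cal{I}}_k}>$ without any spurious extra component; once this is observed, the corollary follows immediately. (If one preferred a self-contained argument, one could instead repeat the decoding computation \eqref{sdecoding}--type argument of Lemma \ref{lemma1} in the scalar setting, where $\mathbf{z}=L_{k}x_{k}+\sum_{j\in{\cal{I}}_k}L_{j}x_{j}$ and uniqueness of the $x_k$-component is equivalent to $L_k\notin\ <L_{{\cal{I}}_k}>$, but this merely re-derives the lemma and is not needed.)
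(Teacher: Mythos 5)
Your proposal is correct and matches the paper's intent exactly: the paper states this as an immediate corollary of Lemma~\ref{lemma1}, i.e., the $m=1$ specialization, which is precisely the substitution you carry out. Your observation that $\mathbf{S}_{k\setminus i}$ is vacuous when $m=1$ is the only point needing care, and you handle it correctly.
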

\begin{lemma}
\label{lemma2}
Let $\mathbf{L}$ be the AIR matrix of size $K \times (D+1)$. The matrix $\mathbf{L}$ can be used as an optimal length encoding matrix for the SUICP-SNI with $K$ messages, $D$ interfering messages after and $U=\text{gcd}(K,D+1)-1$ interfering messages before the desired message.
\end{lemma}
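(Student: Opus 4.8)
The plan is to reduce everything to Corollary \ref{cor1} and then to a structural property of AIR matrices. Since $\mathbf{L}$ is the $K\times(D+1)$ AIR matrix, it is a scalar encoding matrix ($m=1$), and Corollary \ref{cor1} says it is a valid index code for this $(K,D,U)$ SUICP-SNI (with $U=\gcd(K,D+1)-1$) if and only if $L_k\notin\langle L_{{\cal I}_k}\rangle$ for all $k\in[0:K-1]$, where ${\cal I}_k=\{k-U,\dots,k-1\}\cup\{k+1,\dots,k+D\}$. Its length is $D+1$, which already equals the lower bound $\beta\ge D+1$ of \eqref{cap4}; so once validity is shown, optimality of the length is automatic, and the whole proof reduces to the non-membership condition.

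Set $g=\gcd(K,D+1)$, $t=(D+1)/g$ and $\mathbf{A}=\mathbf{L}_{(K/g)\times t}$ (note $K\ge D+1$, so this is a legitimate AIR matrix). The first thing I would prove, by induction along Algorithm \ref{algo2} and the recursion \eqref{chain}, is the Kronecker identity $\mathbf{L}_{K\times(D+1)}=\mathbf{A}\otimes\mathbf{I}_g$: here $K$, $D+1$ and every remainder $\lambda_i$ produced by the algorithm are multiples of $g$, one has $\mathbf{I}_{cg\times dg}=\mathbf{I}_{c\times d}\otimes\mathbf{I}_g$, and $(\,\cdot\,)\otimes\mathbf{I}_g$ commutes with the row- and column-concatenations through which Algorithm \ref{algo2} assembles $\mathbf{L}$. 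Consequently, row $L_{gi+a}$ (with $0\le a<g$) is the $i$th row of $\mathbf{A}$ placed in the columns congruent to $a$ modulo $g$, so rows of distinct residues mod $g$ are supported on disjoint blocks of columns; I would also use the property of \cite{VaR2} that any $t$ consecutive rows of the $(K/g)\times t$ AIR matrix $\mathbf{A}$ are linearly independent over every field. Now fix $k=gk_0+a_0$. The $U=g-1$ ``past'' interferers $L_{k-1},\dots,L_{k-g+1}$ have residues running over all $a\ne a_0$, and for each such $a$ the ``future'' interferers $L_{k+1},\dots,L_{k+D}$ contain $t$ consecutive rows of $\mathbf{A}$ in the $a$-block, which therefore span that whole $t$-dimensional block; hence every $L_{k-j}$ already lies in $\langle L_{k+1},\dots,L_{k+D}\rangle$, so $\langle L_{{\cal I}_k}\rangle=\langle L_{k+1},\dots,L_{k+D}\rangle$. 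In the $a_0$-block, however, $L_{k+1},\dots,L_{k+D}$ only contribute rows $k_0+1,\dots,k_0+t-1$ of $\mathbf{A}$, while $L_k$ contributes row $k_0$; since rows $k_0,\dots,k_0+t-1$ are $t$ consecutive rows of $\mathbf{A}$ and hence independent, row $k_0$ is not in the span of rows $k_0+1,\dots,k_0+t-1$, and the remaining future interferers live in other column blocks. Therefore $L_k\notin\langle L_{{\cal I}_k}\rangle$ for every $k$, so by Corollary \ref{cor1} $\mathbf{L}$ is a valid encoding matrix, and its length $D+1$ is optimal by \eqref{cap4}.

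The main obstacle is the Kronecker factorization $\mathbf{L}_{K\times(D+1)}=\mathbf{L}_{(K/g)\times t}\otimes\mathbf{I}_g$: it is transparent on examples but has to be tracked carefully through the Step 1 / Step 2 / tail-recursion case split of Algorithm \ref{algo2}, keeping account of how the unfilled part of $\mathbf{L}$ shrinks at each stage; everything afterwards is elementary linear algebra over an arbitrary field, and the argument degenerates correctly in the boundary cases $g=1$ (no past interference, so the claim is just independence of $D+1$ consecutive rows) and $g=D+1$ (where $\mathbf{L}$ is a stack of copies of $\mathbf{I}_{D+1}$). If this factorization, or the equivalent statement about AIR matrices, is already available from the authors' earlier work, the lemma follows directly from it together with Corollary \ref{cor1} and \eqref{cap4}.
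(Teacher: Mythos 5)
Your proposal is correct, but it does not follow the paper's route: the paper offers no argument for this lemma at all, deferring entirely to Appendix~B of \cite{VaR4}, where (as with the other structural facts cited here) the proof is carried out by directly tracking the positions of the nonzero entries of the AIR matrix, in the spirit of the down-/up-/right-distance machinery of Section~\ref{sec3} and Appendix~A. Your route is a genuinely different and, to my mind, cleaner decomposition: the factorization $\mathbf{L}_{K\times(D+1)}=\mathbf{L}_{(K/g)\times t}\otimes\mathbf{I}_g$ with $g=\gcd(K,D+1)$, $t=(D+1)/g$ splits the columns into $g$ residue blocks, reduces the two-sided ($U=g-1$) condition of Corollary~\ref{cor1} to the one-sided adjacent-row-independence property of the smaller AIR matrix already established in \cite{VaR2}, and makes the appearance of $\gcd(K,D+1)$ transparent rather than incidental. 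The factorization itself is sound: every $\lambda_i$ in \eqref{chain} is a multiple of $g$, $\mathbf{I}_{cg\times dg}=\mathbf{I}_{c\times d}\otimes\mathbf{I}_g$, and tensoring with $\mathbf{I}_g$ commutes with the row/column concatenations of Algorithm~\ref{algo2}, so your induction goes through; the residue bookkeeping (the $g-1$ past interferers exhausting the blocks other than $a_0$, the future interferers supplying $t$ cyclically consecutive rows of $\mathbf{A}$ in each of those blocks but only $t-1$ in block $a_0$) is also right, and optimality of the length is indeed immediate from \eqref{cap4}. Two points worth making explicit in a final write-up: (i) the independence of $t$ adjacent rows of $\mathbf{A}$ must be invoked in its \emph{cyclic} form, since $k_0,\dots,k_0+t-1$ may wrap modulo $K/g$ --- this is the form \cite{VaR2} needs for the cyclic one-sided problem, so it is available; (ii) the concluding step is cleanest phrased as a projection onto the block-$a_0$ columns, which annihilates every interferer except $L_{k+g},\dots,L_{k+(t-1)g}$, whose restrictions are rows $k_0+1,\dots,k_0+t-1$ of $\mathbf{A}$, and row $k_0$ is not in their span. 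What your approach buys is a self-contained, field-independent proof resting only on \cite{VaR2}; what the paper's citation buys is brevity at the cost of the reader having to consult \cite{VaR4}.
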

\begin{proof}
Proof is given in Appendix B of \cite{VaR4}.
\end{proof}
\begin{lemma}
\label{lemma3}
In the AIR matrix of size $m \times n$, every row $L_k$ is not in the span of $n-1$ rows above and $\text{gcd}(m,n)-1$ rows below $L_k$ for $k \in [0:m-1]$.
\end{lemma}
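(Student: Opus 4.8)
The plan is to treat the two assertions separately. For the ``$n-1$ rows above'' part I would just invoke the defining property of AIR matrices from \cite{VaR2}: any $n$ cyclically consecutive rows of $\mathbf{L}_{m\times n}$ are linearly independent over every field. Since $L_k$ together with the $n-1$ rows immediately above it forms such a block, $L_k$ is not in their span; the real content of the lemma is that one may adjoin $g-1$ further rows below, where $g:=\gcd(m,n)$, without spoiling this.

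For the full statement I would argue by strong induction on $n$, following the recursion of Algorithm \ref{algo2}, with all subscripts taken modulo $m$. Write $m=qn+r$, $0\le r<n$. In the base case $r=0$, $\mathbf{L}_{m\times n}$ is $m/n$ vertically stacked copies of $\mathbf{I}_n$ and $g=n$; after reducing row indices modulo $n$, each of the $n-1$ rows above $L_k$ and each of the $n-1$ rows below $L_k$ is a standard basis vector $e_j$ with $j\not\equiv k$, so their joint span misses $e_{k\bmod n}=L_k$. For $0<r<n$ write $n=q'r+r'$ and use the block description from Algorithm \ref{algo2}: the first $qn$ rows are $q$ stacked copies of $\mathbf{I}_n$, and in the last $r$ rows the first $q'r$ columns form $q'$ side-by-side copies of $\mathbf{I}_r$ while the last $r'$ columns form the AIR matrix $\mathbf{L}_{r\times r'}$, with $\gcd(r,r')=g$ (when $r'=0$ the last $r$ rows are $q'$ side-by-side copies of $\mathbf{I}_r$ and $g=r$). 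Since $0<r<n$ forces $g\le r<n$, the window $\{L_{k-(n-1)},\dots,L_{k-1},L_{k+1},\dots,L_{k+g-1}\}$ occupies $n+g-1<2n$ cyclically consecutive rows and therefore meets the top $\mathbf{I}_n$-band and the bottom $r$-row band in a controlled pattern. I would then split into cases according to that pattern and, after splitting the columns into the first $q'r$ and the last $r'$, reduce each case either to independence of $n$ consecutive rows or to the induction hypothesis applied to $\mathbf{L}_{r\times r'}$, whose second dimension $r'<n$.

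The hard part is this case analysis together with the index bookkeeping: when $k$ lies near the seam between the bottom $r$ rows and the top $qn$ rows the cyclic window straddles both bands, and one has to recast the piece of the statement living in the last $r'$ columns of the bottom rows as a bona fide instance of the induction hypothesis on $\mathbf{L}_{r\times r'}$, all the while checking that $n-1$ above and exactly $g-1$ below is the correct and tight count (the case $m=6$, $n=4$, where $g-1=1$ and a second row below would already make $L_k$ recoverable, is a good calibration check). An equivalent but more explicit route is to unroll the recursion and reason directly from the positions and sizes of the $l+2$ submatrices of Fig. \ref{fig1} via the chain $\lambda_{-1}=n$, $\lambda_0=m-n$, $\lambda_{i-1}=\beta_i\lambda_i+\lambda_{i+1}$ with $g=\lambda_l$; this parallels the proof of Lemma \ref{lemma2} in \cite{VaR4}. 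I note that, although the claim is the left--right reflection of the decoding condition behind Lemma \ref{lemma2}, the AIR matrix is not symmetric under reversing its row order, so Lemma \ref{lemma3} is not an immediate corollary of Lemma \ref{lemma2} and genuinely needs the above induction.
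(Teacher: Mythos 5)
Your proposal goes down a much harder road than necessary, and it does so because of a misreading of the paper's directional convention. Here ``above'' follows Section I-A, where the ``$D$ messages above'' the desired message are $x_{k+1},\dots,x_{k+D}$, i.e.\ the \emph{higher}-indexed ones. So Lemma \ref{lemma3} asserts $L_k \notin \langle L_{k-\gcd(m,n)+1},\dots,L_{k-1},L_{k+1},\dots,L_{k+n-1}\rangle$ --- the $n-1$ rows on the $+$ side and the $\gcd(m,n)-1$ rows on the $-$ side --- and this is exactly the decoding condition of Corollary \ref{cor1} for the $(m,\,n-1,\,\gcd(m,n)-1)$ SUICP-SNI. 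The paper's entire proof is therefore the two-step reduction you explicitly dismiss: Lemma \ref{lemma2} says the $m\times n$ AIR matrix is an encoding matrix for that instance, and Corollary \ref{cor1} converts ``is an encoding matrix'' into the row-span statement. Your closing remark that Lemma \ref{lemma3} ``is not an immediate corollary of Lemma \ref{lemma2}'' is wrong under the intended reading; it is precisely an immediate corollary, and no induction is needed. The direction also matters downstream: Theorem \ref{thm1} needs the $(n-1)$-row window on the side of the interfering messages $x_{k+1},\dots,x_{k+D}$, i.e.\ the higher row indices $L_{k+1},\dots,L_{k+n-1}$, which is what the reduction delivers; the reflected statement you set out to prove, with window $\{L_{k-(n-1)},\dots,L_{k-1}\}\cup\{L_{k+1},\dots,L_{k+g-1}\}$, is not the one used there, and (as you yourself note) it does not follow from the other by row reversal.

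Even taken on its own terms, the inductive argument is only a plan: the case analysis at the seam between the $qn$-row band and the bottom $r$-row band, and the re-indexing needed to invoke the hypothesis on $\mathbf{L}_{r\times r'}$, are flagged as ``the hard part'' but not carried out, so the proposal does not yet constitute a proof of either version of the statement. The one piece that is complete and correct is the observation that $L_k$ is never in the span of any $n-1$ cyclically adjacent rows on one side, since any $n$ adjacent rows of an AIR matrix are linearly independent.
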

\begin{proof}
Consider a SUICP-SNI with $m$ messages, $n-1$ interfering messages after and $\text{gcd}(m,n)-1$ interfering messages before the desired message. From Lemma \ref{lemma2},  AIR matrix of size $m \times n$ can be used as an optimal length encoding matrix for this SUICP-SNI. Let $\mathbf{L}$ be the AIR matrix of size $m \times n$. From Corollary \ref{cor1}, $\mathbf{L}$ is an scalar linear encoding matrix for SUICP-SNI if and only if 
\begin{align*}
L_k \notin &< L_{{\cal{I}}_k}>\\&=<L_{k-\text{gcd}(m,n)+1},L_{k-\text{gcd}(m,n)+2},\ldots,L_{k-1},\\&~~~~~~~~~~~~~~~~~~~~~~~~~L_{k+1},L_{k+2},\ldots,L_{k+n-1}>. 
\end{align*}

That is, every row $L_k$ in $\mathbf{L}$ for $k \in [0:m-1]$ is not in the span of $n$ rows above and $\text{gcd}(m,n)-1$ rows below the $L_k$. This completes the proof.
\end{proof}
\begin{theorem}
\label{thm1}
Consider a $(K,D,U)$ SUICP-SNI. For this index coding problem, for every $(a,b) \in \mathbf{S}_{K,D,U}$, the rate $D+1+\frac{a}{b}$ can be achieved by $b$-dimensional vector linear index coding by using the AIR matrix of size $Kb \times (b(D+1)+a)$. 
\end{theorem}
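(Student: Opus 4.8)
The plan is to apply the decoding criterion of Lemma~\ref{lemma1} to the AIR matrix $\mathbf{L}\triangleq\mathbf{L}_{Kb\times(b(D+1)+a)}$, viewed as a $b$-dimensional vector encoding matrix, so that the $b$ rows carrying the message $x_k$ are $L_{kb},L_{kb+1},\dots,L_{kb+b-1}$. Writing $m=Kb$ and $n=b(D+1)+a$, the resulting code has length $n$ and message dimension $b$, hence rate $n/b=D+1+\frac{a}{b}$; we may assume $n\le m$, since otherwise this rate already exceeds $K$ and is trivially achievable. Thus the whole task reduces to showing that $\mathbf{L}$ is a valid encoding matrix, i.e.\ by Lemma~\ref{lemma1} that $L_{kb+i}\notin\langle\mathbf{S}_{\mathcal{I}_k},\mathbf{S}_{k\setminus i}\rangle$ for all $k\in[0:K-1]$ and all $i\in[0:b-1]$.

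First I would pin down exactly which rows of $\mathbf{L}$ span $\langle\mathbf{S}_{\mathcal{I}_k},\mathbf{S}_{k\setminus i}\rangle$: the $b-1$ rows $\{L_{kb+j}:0\le j\le b-1,\ j\ne i\}$ from $\mathbf{S}_{k\setminus i}$, the $Db$ rows $L_{(k+1)b},\dots,L_{(k+D+1)b-1}$ from $\mathbf{S}_{x_{k+1}},\dots,\mathbf{S}_{x_{k+D}}$, and the $Ub$ rows $L_{(k-U)b},\dots,L_{kb-1}$ from $\mathbf{S}_{x_{k-1}},\dots,\mathbf{S}_{x_{k-U}}$. A direct index count shows these dovetail with no gaps into a single cyclic interval about $L_{kb+i}$: they are exactly the $(D+1)b-1-i$ rows immediately above $L_{kb+i}$ together with the $Ub+i$ rows immediately below it (with ``above'' meaning larger row index modulo $m$, as in Lemma~\ref{lemma3}), and these $(U+D+1)b-1$ rows, together with $L_{kb+i}$ itself, are distinct since $(U+D+1)b\le Kb=m$ by the hypothesis $U+D<K$, so no wrap-around collision occurs.

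Next I would invoke Lemma~\ref{lemma3}: in the AIR matrix of size $m\times n$ the row $L_{kb+i}$ lies outside the span of the $n-1$ rows above it and the $\gcd(m,n)-1$ rows below it. It therefore suffices to check the two block-containments $(D+1)b-1-i\le n-1$ and $Ub+i\le\gcd(m,n)-1$. The first is immediate from $n=b(D+1)+a\ge b(D+1)\ge(D+1)b-i$. For the second, since $i\le b-1$ it is enough that $\gcd(m,n)\ge(U+1)b$, and $\gcd(m,n)=\gcd\bigl(bK,\,b(D+1)+a\bigr)\ge b(U+1)$ holds precisely because $(a,b)\in\mathbf{S}_{K,D,U}$. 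Hence $L_{kb+i}\notin\langle\mathbf{S}_{\mathcal{I}_k},\mathbf{S}_{k\setminus i}\rangle$ for every $k$ and $i$, and Lemma~\ref{lemma1} finishes the argument. The main obstacle is the index bookkeeping of the second paragraph---verifying that the ``self'' rows $\mathbf{S}_{k\setminus i}$ and the two blocks of interferer rows $\{x_{k+1},\dots,x_{k+D}\}$ and $\{x_{k-U},\dots,x_{k-1}\}$ merge into one contiguous cyclic run around $L_{kb+i}$ (so that Lemma~\ref{lemma3} applies verbatim), and that the $i$-dependence is tracked so that the worst case $i=b-1$ of the ``rows below'' bound is exactly the defining inequality $\gcd(bK,\,b(D+1)+a)\ge b(U+1)$ of $\mathbf{S}_{K,D,U}$; everything else is routine.
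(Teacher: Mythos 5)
Your proposal is correct and follows essentially the same route as the paper: reduce to the criterion of Lemma~\ref{lemma1}, invoke Lemma~\ref{lemma3} for the $Kb\times(b(D+1)+a)$ AIR matrix, and use the defining inequality $\gcd(bK,\,b(D+1)+a)\ge b(U+1)$ of $\mathbf{S}_{K,D,U}$ to cover the rows below. Your explicit bookkeeping of how $\mathbf{S}_{k\setminus i}$ and the two interferer blocks merge into $(D+1)b-1-i$ rows above and $Ub+i$ rows below $L_{kb+i}$ (with the worst case at $i=b-1$) is a detail the paper's proof states only implicitly, and it checks out.
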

\begin{proof}
In a $b$-dimensional vector linear index code, $x_k \in \mathbb{F}^{b}_q$ for $k \in [0:K-1]$, the receiver $R_k$ wants the $b$ message symbols $x_{k,1}~x_{k,2}\ldots x_{k,b}$. Let 
\begin{align*}
\mathbf{x}=[\underbrace{x_{0,1}x_{0,2}\ldots x_{0,b}}_{x_0}~\underbrace{x_{1,1}x_{1,2}\ldots x_{1,b}}_{x_1}\ldots \underbrace{x_{K-1,1}\ldots x_{K-1,b}}_{x_{K-1}}].
\end{align*}

Let $\mathbf{L}$ be the AIR matrix  of size $Kb \times (b(D+1)+a)$. From Lemma \ref{lemma3}, every row $L_k$ in $\mathbf{L}$ is not in the span of $b(D+1)+a-1$ rows above and $\text{gcd}(Kb,b(D+1)+a)-1$ rows below the $L_k$ for each $k \in [0:Kb-1]$. From Definition \ref{def1}, $a$ and $b$ are the positive integers satisfying the relation
\begin{align*}
\text{gcd}(Kb,b(D+1)+a) \geq b(U+1).
\end{align*}

Hence, every row $L_k$ in $\mathbf{L}$ is not in the span of $b(D+1)+a-1$ rows above and $b(U+1)-1$ rows below the $L_k$ for each $k \in [0:Kb-1]$. According to Lemma \ref{lemma1},  the matrix $\mathbf{L}$ can be used as a $b$-dimensional encoding matrix for SUICP-SNI with $K$ messages, $D$ and $U$ interfering messages after and before.

The matrix $\mathbf{L}$ is mapping $Kb$ message symbols into $b(D+1)+a$ broadcast symbols. The rate achieved by $\mathbf{L}$ is equal to the number of broadcast symbols per message symbol and it is given by
$\frac{b(D+1)+a}{b}=D+1+\frac{a}{b}.$ 
 This completes the proof.
\end{proof}
\begin{remark}
The AIR encoding matrix $\mathbf{L}_{Kb \times (b(D+1)+a)}$ is an encoding matrix over every field. Hence, the encoding for $(K,D,U)$ SUICP-SNI given in Theorem \ref{thm1} is independent of field size.
\end{remark}
\begin{example}
\label{ex10}
Consider the SUICP-SNI with $K=13,D=4,U=1$. For this case, we have $(1,5) \in \mathbf{S}_{13,4,1}$ and corresponding  $D+1+\frac{a}{b}$ is $5.2$. This rate can be achieved by the AIR matrix of size $65 \times 26$ given below.
$$\mathbf{L}_{65 \times 26}=\left.\left[\begin{array}{*{20}c}
   \mathbf{I}_{26}  \\
   \mathbf{I}_{26}  \\
   \mathbf{I}_{13}~\mathbf{I}_{13} \\   
   \end{array}\right]\right.$$
   The index code generated by $\mathbf{L}$ is given by 
   \begin{align*}
   &[c_0~c_1~\ldots c_{25}]=\mathbf{x}_{1 \times 65}\mathbf{L}_{65 \times 26}
   \end{align*}
   where $\mathbf{x}_{1 \times 65}=[x_{0,1}~x_{0,2}\ldots x_{0,5}~x_{1,1}~x_{1,2}\ldots x_{1,5}\ldots x_{12,5}]$.
   The index code is given in Table \ref{table11}. Receiver $R_k$ required to decode five message symbols $x_{k,1},x_{k,2},x_{k,3},x_{k,4}$ and $x_{k,5}$ for $k \in [0:12]$. Let $\tau_{k,j}$ be the code symbols used by receiver $R_k$ to decode $x_{k,j}$ for $k \in [0:12]$ and $j \in [1:5]$. Table \ref{table12} gives the code symbols used by each receiver to decode its wanted message symbol. The details of the decoding procedure is given in Section \ref{sec3}.
   \begin{table}[ht]
\centering
\setlength\extrarowheight{0pt}
\begin{tabular}{|c|c|}
\hline
$c_0=x_{0,1}+x_{5,2}+x_{10,3}$ & $c_{13}=x_{2,4}+x_{7,5}+x_{10,3}$\\
\hline
$c_1=x_{0,2}+x_{5,3}+x_{10,4}$ & $c_{14}=x_{2,5}+x_{8,1}+x_{10,4}$ \\
\hline
$c_2=x_{0,3}+x_{5,4}+x_{10,5}$ & $c_{15}=x_{3,1}+x_{8,2}+x_{10,5}$ \\
\hline
$c_3=x_{0,4}+x_{5,5}+x_{11,1}$ & $c_{16}=x_{3,2}+x_{8,3}+x_{11,1}$ \\
\hline
$c_4=x_{0,5}+x_{6,1}+x_{11,2}$ &  $c_{17}=x_{3,3}+x_{8,4}+x_{11,2}$ \\
\hline
$c_5=x_{1,1}+x_{6,2}+x_{11,3}$ & $c_{18}=x_{3,4}+x_{8,5}+x_{11,3}$ \\
\hline
$c_6=x_{1,2}+x_{6,3}+x_{11,4}$ & $c_{19}=x_{3,5}+x_{9,1}+x_{11,4}$ \\
\hline
$c_7=x_{1,3}+x_{6,4}+x_{11,5}$ & $c_{20}=x_{4,1}+x_{9,2}+x_{11,5}$\\
\hline
$c_8=x_{1,4}+x_{6,5}+x_{12,1}$ & $c_{21}=x_{4,2}+x_{9,3}+x_{12,1}$ \\
\hline
$c_9=x_{1,5}+x_{7,1}+x_{12,2}$& $c_{22}=x_{4,3}+x_{9,4}+x_{12,2}$ \\
\hline
$c_{10}=x_{2,1}+x_{7,2}+x_{12,3}$ & $c_{23}=x_{4,4}+x_{9,5}+x_{12,3}$ \\
\hline
$ c_{11}=x_{2,2}+x_{7,3}+x_{12,4}$ & $c_{24}=x_{4,5}+x_{10,1}+x_{12,4}$  \\
\hline
$c_{12}=x_{2,3}+x_{7,4}+x_{12,5}$ & $c_{25}=x_{5,1}+x_{10,2}+x_{12,5}$ \\
\hline
\end{tabular}
\vspace{5pt}
\caption{Vector linear index code for SUICP-SNI given in Example \ref{ex10}}
\label{table11}
\vspace{-5pt}
\end{table}
\hrule
\begin{table*}[ht]
\centering
\setlength\extrarowheight{0pt}
\begin{tabular}{|c|c|c|c|c|c|c|c|c|c|}
\hline
$x_{k,1}$ &$\tau_{k,1}$&$x_{k,2}$&$\tau_{k,2}$&$x_{k,3}$&$\tau_{k,3}$&$x_{k,4}$&$\tau_{k,4}$&$x_{k,5}$&$\tau_{k,5}$ \\
\hline
$x_{0,1}$ & $c_0$ & $x_{0,2}$&$c_1$&$x_{0,3}$&$c_2$& $x_{0,4}$&$c_3$&$x_{0,5}$&$c_4$ \\
\hline
$x_{1,1}$ & $c_5$ & $x_{1,2}$&$c_6$&$x_{1,3}$&$c_7$ & $x_{1,4}$&$c_8$&$x_{1,5}$&$c_9$ \\
\hline
$x_{2,1}$ & $c_{10}$ & $x_{2,2}$&$c_{11}$&$x_{2,3}$&$c_{12}$ & $x_{2,4}$&$c_{13}$&$x_{2,5}$&$c_{14}$ \\
\hline
$x_{3,1}$ & $c_{15}$ &$x_{3,2}$&$c_{16}$&$x_{3,3}$&$c_{17}$& $x_{3,4}$&$c_{18}$&$x_{3,5}$&$c_{19}$  \\
\hline
$x_{4,1}$ & $c_{20}$ &$x_{4,2}$&$c_{21}$&$x_{4,3}$&$c_{22}$ & $x_{4,4}$&$c_{23}$&$x_{4,5}$&$c_{24}$ \\
\hline
$x_{5,1}$ & $c_{25}$ & $x_{5,2}$&$c_{0}$&$x_{5,3}$&$c_{1}$& $x_{5,4}$&$c_2$&$x_{5,5}$&$c_3$  \\
\hline
$x_{6,1}$ & $c_{4}$ & $x_{6,2}$&$c_{5}$&$x_{6,3}$&$c_{6}$& $x_{6,4}$&$c_7$&$x_{6,5}$&$c_8$ \\
\hline
$x_{7,1}$ & $c_{9}$ & $x_{7,2}$&$c_{10}$&$x_{7,3}$&$c_{11}$ & $x_{7,4}$&$c_{12}$&$x_{7,5}$&$c_{0}+c_{13}$ \\
\hline
$x_{8,1}$& $c_{1}+c_{14}$ & $x_{8,2}$&$c_{2}+c_{15}$&$x_{8,3}$&$c_{3}+c_{16}$& $x_{1,2}$&$c_{4}+c_{17}$&$x_{1,3}$&$c_5+c_{18}$ \\
\hline
$x_{9,1}$ & $c_6+c_{19}$ & $x_{9,2}$&$c_7+c_{20}$&$x_{9,3}$&$c_8+c_{21}$ & $x_{1,2}$&$c_9+c_{22}$&$x_{1,3}$&$c_{10}+c_{23}$ \\
\hline
$x_{10,1}$ & $c_{11}+c_{24}$ & $x_{10,2}$&$c_{12}+c_{25}$&$x_{10,3}$&$c_{13}$& $x_{1,2}$&$c_{14}$&$x_{1,3}$&$c_{15}$  \\
\hline
$x_{11,1}$ & $c_{16}$ & $x_{11,2}$&$c_{17}$&$x_{11,3}$&$c_{18}$ & $x_{1,2}$&$c_{19}$&$x_{1,3}$&$c_{20}$ \\
\hline
$x_{12,1}$ & $c_{21}$ & $x_{12,2}$&$c_{22}$&$x_{12,3}$&$c_{23}$ & $x_{1,2}$&$c_{24}$&$x_{1,3}$&$c_{25}$ \\
\hline
\end{tabular}
\vspace{5pt}
\caption{Decoding of vector linear index code for SUICP-SNI given in Example \ref{ex10}}
\label{table12}
\hrule
\end{table*}
\end{example}
\begin{example}
\label{ex2}
Consider a SUICP-SNI with $K=71,D=25,U=1$. For this SUICP-SNI, we have $(a=1,b=30) \in \mathbf{S}_{71,25,1}$ and  corresponding $D+1+\frac{a}{b}$ is $25.033$. This rate can be achieved by the AIR matrix of size $2130 \times 781$. The encoding matrix for this SUICP-SNI is shown in Fig. \ref{fig10}.
\end{example}
\begin{figure*}[ht]
\centering
\includegraphics[scale=0.65]{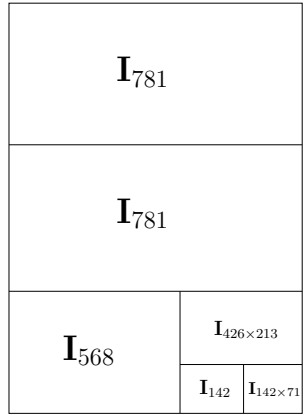}\\
\caption{Encoding matrix for SUICP-SNI given in Example \ref{ex2}}.
\label{fig10}
\end{figure*}

\begin{lemma}
\label{lemma6}
For every  $(K,D,U)$ SUICP-SNI there exists $(a,b) \in \mathcal{\mathbf{S}}$ such that
\begin{align}
\label{gcd84}
D+1+\frac{a}{b} \leq \frac{K}{\left \lfloor \frac{K}{D+1} \right \rfloor}.
\end{align}
\end{lemma}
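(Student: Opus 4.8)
The plan is to exhibit a single explicit pair $(a,b)$ that does the job, and in fact the natural candidate will meet the right-hand side of \eqref{gcd84} with equality. Write $K = q(D+1) + r$ with $q = \left\lfloor \frac{K}{D+1} \right\rfloor$ and $r = K~\text{mod}~(D+1)$, so $0 \le r < D+1$. I would take $(a,b) = (r,q)$.

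First I would confirm that this is an admissible pair, i.e. $a \in Z_{\ge 0}$ and $b \in Z_{>0}$. The former is immediate. For the latter, the standing hypothesis $U+D<K$ (together with $U \ge 0$) gives $D < K$, hence $D+1 \le K$ and therefore $q = \left\lfloor \frac{K}{D+1} \right\rfloor \ge 1$.

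Next I would check that $(r,q) \in \mathbf{S}_{K,D,U}$ via Definition \ref{def1}. The crucial observation is that $b(D+1)+a = q(D+1)+r = K$, so $\text{gcd}\big(bK,\, b(D+1)+a\big) = \text{gcd}(qK, K) = K$. The defining inequality $\text{gcd}(bK, b(D+1)+a) \ge b(U+1)$ thus reduces to $K \ge q(U+1)$, which holds since $K = q(D+1)+r \ge q(D+1) \ge q(U+1)$, the last step using $U \le D$.

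Finally, for $(a,b)=(r,q)$ we have $D+1+\frac{a}{b} = D+1+\frac{r}{q} = \frac{q(D+1)+r}{q} = \frac{K}{\left\lfloor \frac{K}{D+1} \right\rfloor}$, so \eqref{gcd84} holds, indeed with equality. I do not anticipate any genuine obstacle here: the only places the hypotheses enter are $q \ge 1$ (needing $D+1 \le K$) and $q(D+1) \ge q(U+1)$ (needing $U \le D$). The single idea is to notice that choosing $b = \left\lfloor \frac{K}{D+1} \right\rfloor$ and $a = K~\text{mod}~(D+1)$ makes $b(D+1)+a$ collapse to $K$, after which the gcd condition is satisfied automatically.
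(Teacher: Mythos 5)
Your proposal is correct and follows essentially the same route as the paper: both take $(a,b)=\bigl(K~\text{mod}~(D+1),\ \lfloor K/(D+1)\rfloor\bigr)$ and observe that the bound is then met with equality. In fact you are slightly more careful than the paper, which merely asserts that this pair satisfies the gcd condition of Definition \ref{def1}, whereas you verify it explicitly via $b(D+1)+a=K$ and $K\ge q(U+1)$.
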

\begin{proof}
We have 
\begin{align}
\label{gcd83}
\nonumber
\frac{K}{\left\lfloor\frac{K}{D+1}\right\rfloor}&=\frac{\left\lfloor\frac{K}{D+1}\right\rfloor(D+1)+K \text{mod} (D+1)}{\left\lfloor\frac{K}{D+1}\right\rfloor} \\& 
=D+1+\frac{K \text{mod} (D+1)}{\left\lfloor\frac{K}{D+1}\right\rfloor}=D+1+\frac{\alpha}{\gamma},
\end{align}
where $\alpha=K \text{mod} (D+1)$ and $\gamma=\left\lfloor\frac{K}{D+1}\right\rfloor$.

From \eqref{gcd83}, we have
\begin{align}
\label{gcd81}
\alpha=K-\gamma(D+1)
\end{align}
and these values of $\alpha$ and $\gamma$ satisfy the equation $\text{gcd}(K \gamma,\gamma(D+1)+\alpha)\geq \gamma (U+1)$. Hence, $(\alpha,\gamma) \in \mathcal{\mathbf{S}}$. This completes the proof.
\end{proof}
\begin{corollary}
\label{cor2}
For, SUICP-SNI, the vector linear index codes constructed by AIR matrices are within $\frac{K \text{mod} (D+1)}{\left\lfloor\frac{K}{D+1}\right\rfloor}$ symbols per message from the lower bound on broadcast rate given in \eqref{cap4}.
\end{corollary}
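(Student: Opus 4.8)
The plan is to obtain Corollary \ref{cor2} as an immediate consequence of Theorem \ref{thm1}, Lemma \ref{lemma6}, and the lower bound \eqref{cap4}; no new machinery is required. The bound \eqref{cap4} tells us that $\beta \geq D+1$ for every $(K,D,U)$ SUICP-SNI, so $D+1$ is the benchmark against which the rate of any constructed code is measured, and the task is simply to exhibit an achievable rate that exceeds $D+1$ by no more than $\frac{K~\text{mod}~(D+1)}{\left\lfloor K/(D+1)\right\rfloor}$.

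First I would invoke Lemma \ref{lemma6} to produce a concrete admissible pair: set $\alpha = K~\text{mod}~(D+1)$ and $\gamma = \left\lfloor K/(D+1)\right\rfloor$. As recorded in the proof of Lemma \ref{lemma6}, one has $\gamma(D+1)+\alpha = K$, hence $\text{gcd}\big(K\gamma,\gamma(D+1)+\alpha\big) = \text{gcd}(K\gamma,K) = K$; and since $U \leq D$ we get $\gamma(U+1) \leq \gamma(D+1) \leq K$, so the defining inequality of $\mathbf{S}_{K,D,U}$ in Definition \ref{def1} is satisfied and $(\alpha,\gamma) \in \mathbf{S}_{K,D,U}$.

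Next I would apply Theorem \ref{thm1} to this pair: the AIR matrix of size $K\gamma \times \big(\gamma(D+1)+\alpha\big)$ gives a $\gamma$-dimensional vector linear index code of rate $D+1+\frac{\alpha}{\gamma} = D+1+\frac{K~\text{mod}~(D+1)}{\left\lfloor K/(D+1)\right\rfloor}$, valid over every field. Comparing with the lower bound $D+1$ from \eqref{cap4}, the gap in index-code symbols per message symbol is
$$\left(D+1+\frac{K~\text{mod}~(D+1)}{\left\lfloor K/(D+1)\right\rfloor}\right) - (D+1) = \frac{K~\text{mod}~(D+1)}{\left\lfloor K/(D+1)\right\rfloor},$$
which is exactly the claimed bound, completing the proof.

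Since each step is a direct citation of an already-established result, there is essentially no obstacle. The only points requiring care are bookkeeping ones: confirming that the pair returned by Lemma \ref{lemma6} genuinely lies in $\mathbf{S}_{K,D,U}$ (which hinges on $U \leq D$, guaranteeing $\gamma(U+1) \leq K$), and being precise that "within $\frac{K~\text{mod}~(D+1)}{\left\lfloor K/(D+1)\right\rfloor}$ symbols per message" refers to the difference between the achieved rate and the broadcast-rate lower bound, not to any statement about the exact capacity.
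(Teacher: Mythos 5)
Your proposal is correct and follows essentially the same route as the paper: Corollary \ref{cor2} is intended as an immediate consequence of Lemma \ref{lemma6} (which exhibits the pair $(\alpha,\gamma)=\big(K~\text{mod}~(D+1),\left\lfloor K/(D+1)\right\rfloor\big)$ in $\mathbf{S}_{K,D,U}$), Theorem \ref{thm1} (achievability of $D+1+\alpha/\gamma$), and the lower bound \eqref{cap4}. Your explicit verification that $\text{gcd}(K\gamma,\gamma(D+1)+\alpha)=K\geq\gamma(D+1)\geq\gamma(U+1)$ is a welcome detail that the paper's proof of Lemma \ref{lemma6} only asserts.
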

\begin{theorem}
\label{thm2}
There exists $(a,b) \in \mathbf{S}_{K,D,U}$ such that the rate $D+1+\frac{a}{b}$ coincide with the results on the exact capacity of SUICP-SNI given in \eqref{cap1},\eqref{cap2} and \eqref{cap3} in the respective settings. 
\end{theorem}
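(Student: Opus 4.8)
The plan is to treat the three reference settings separately and, in each, exhibit an explicit pair $(a,b)$ that belongs to $\mathbf{S}_{K,D,U}$ and for which $D+1+\frac{a}{b}$ equals the reciprocal of the relevant capacity; the achievability then follows directly from Theorem \ref{thm1}.

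For \eqref{cap3} (arbitrary $K,D$ with $U=\gcd(K,D+1)-1$), I would take $(a,b)=(0,1)$. Then $\gcd(bK,\,b(D+1)+a)=\gcd(K,D+1)=U+1=b(U+1)$, so the defining inequality of Definition \ref{def1} holds with equality and $(0,1)\in\mathbf{S}_{K,D,U}$. Theorem \ref{thm1} then yields the achievable rate $D+1+\frac{0}{1}=D+1$, the reciprocal of $C=\frac{1}{D+1}$; this step also recovers Lemma \ref{lemma2} as a byproduct.

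For \eqref{cap2} ($U=D=1$), I would invoke Lemma \ref{lemma6}, which supplies the pair $(\alpha,\gamma)=(K\bmod 2,\ \lfloor K/2\rfloor)\in\mathbf{S}_{K,1,1}$. The resulting rate is $2+\frac{K\bmod 2}{\lfloor K/2\rfloor}=\frac{2\lfloor K/2\rfloor+(K\bmod 2)}{\lfloor K/2\rfloor}=\frac{K}{\lfloor K/2\rfloor}$, which is exactly $1/C$ for $C=\frac{\lfloor K/2\rfloor}{K}$. For \eqref{cap1} ($K\to\infty$), I would again use the pair $(\alpha,\gamma)$ from Lemma \ref{lemma6}, obtaining the rate $D+1+\frac{K\bmod(D+1)}{\lfloor K/(D+1)\rfloor}$; since $0\le K\bmod(D+1)\le D$ is bounded while $\lfloor K/(D+1)\rfloor\to\infty$, this tends to $D+1$, and together with the lower bound $\beta\ge D+1$ of \eqref{cap4} the achievable broadcast rate converges to $D+1$, matching $C=\frac{1}{D+1}$.

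The only real work here is bookkeeping: checking that the gcd condition of Definition \ref{def1} is met (with equality) by $(0,1)$ when $U=\gcd(K,D+1)-1$, and correctly converting the three stated capacities into broadcast rates before comparing. The asymptotic case needs nothing beyond the observation that a bounded numerator over an unbounded denominator vanishes, so I do not expect a genuine obstacle; the theorem is essentially a consolidation statement establishing that the AIR-matrix construction is tight exactly in the regimes where the capacity is already known.
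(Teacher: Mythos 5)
Your proposal is correct and follows essentially the same route as the paper: the pair $(\alpha,\gamma)=\bigl(K\bmod(D+1),\lfloor K/(D+1)\rfloor\bigr)$ from Lemma \ref{lemma6} handles \eqref{cap1} and \eqref{cap2}, and $(a,b)=(0,1)$ handles \eqref{cap3}. You are somewhat more explicit than the paper (verifying the gcd condition for $(0,1)$ and computing $2+\frac{K\bmod 2}{\lfloor K/2\rfloor}=\frac{K}{\lfloor K/2\rfloor}$ directly rather than just substituting $D=1$ into \eqref{gcd84}), but the substance is identical.
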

\begin{proof}
\begin{enumerate}
\item []
\item To recover the result corresponding to  \eqref{cap1}:
For a given $U \leq D$, if $K \rightarrow \infty$, then $\frac{K \text{mod} (D+1)}{\left\lfloor\frac{K}{D+1}\right\rfloor} \rightarrow 0$. In Lemma \ref{lemma6}, we proved that 
\begin{align*}
D+1+\frac{a}{b} \leq D+1+\frac{K \text{mod} (D+1)}{\left\lfloor\frac{K}{D+1}\right\rfloor}.
\end{align*}
Hence, $D+1+\frac{a}{b} \rightarrow D+1$. This completes the proof.
\item To recover the capacity corresponding to \eqref{cap2}.
 Substituting $D=1$ in \eqref{gcd84} completes the proof of this case.
\item  To recover  \eqref{cap3} as a special case:
If we take $a=0$ and $b=1$, then we get a setting considered in \eqref{cap3}. Hence, the scalar linear code ($b=1$) considered in \eqref{cap3} is a special case of vector linear codes considered in this paper.
\end{enumerate}
\end{proof}

For a given SUICP-SNI with parameters $K,D$ and $U,$ finding the best pair $(a,b) \in \mathbf{S}_{K,D,U}$ such that $D+1+\frac{a}{b}$ is the minimum is an open problem. However for a given set of values for $K$ and $D$ we have $\mathbf{S}_{K,D,U_1} \supseteq \mathbf{S}_{K,D,U_2}$ for $U_1 < U_2.$ This means that all the rates achievable for $(K,D,U_2)$ are achievable for $(K,D,U_1)$ and possibly smaller rates that are not achievable by $(K,D,U_2).$ This is illustrated in the following example.  

\begin{example}
\label{ex3}
For SUICP-SNI with $K=71$, $U \leq D \leq 15$, one possible pair $(a,b)$ and corresponding $D+1+\frac{a}{b}$ are shown in Table \ref{table1}. For these index coding problems, $D+1$ is given in the $6$th column of Table \ref{table1} gives a lower bound on broadcast rate. The values of $D+1$ can be compared with $D+1+\frac{a}{b}$ in Table \ref{table1}. The rate  $D+1+\frac{a}{b}$ given in $7$th column is achieved by using AIR matrices of size given in the $8$th column of Table \ref{table1}. For SUICP-SNI with $K=71$, the capacity is known to the special case $U=D=1$. For $U=D=1$, $D+1+\frac{a}{b}=2.0285$ coincide with the reciprocal of capacity given in \eqref{cap2}.
\end{example}
\begin{table*}[ht]
\centering
\setlength\extrarowheight{1.5pt}
\begin{tabular}{|c|c|c|c|c|c|c|c|}
\hline
$K$ &$D$&$U$&$a$&$b$&$D+1$&$D+1+\frac{a}{b}$&AIR matrix size \\
\hline
71 & 1 & 1&1&35&2&2.0285&$2485 \times 71$ \\
\hline
71 & 2 & 1,2&2&23&3&3.0869&$1633 \times 71$ \\
\hline
71 & 3 & 1&2&35&4&4.0571&$2485 \times 142$ \\
\hline
71 & 3 & 2,3&5&17&4&4.1764&$1207 \times 71$ \\
\hline
71 & 4 & 1,2,3,4&1&14&5&5.0714&$994 \times 71$ \\
\hline
71 & 5 &1&3&35&6&6.0857&$2485 \times 71$ \\
\hline
71 & 5 & 2&4&23&6&6.1739&$1633 \times 142$ \\
\hline
71 & 5 & 3,4,5&5&11&6&6.4545&$781 \times 71$ \\
\hline
71 & 6 &1,2,$\ldots$,6&1&10&7&7.1000&$710 \times 71$ \\
\hline
71 & 7 &1&4&35&8&8.1142&$2485 \times 284$ \\
\hline
71 & 7 & 2,3&6&17&8&8.3529&$1207 \times142$ \\
\hline
71 & 7 & 4,5,6,7&7&8&8&8.8750&$568 \times 71$ \\
\hline
71 & 8 &1&5&31&9&9.1612&$2201 \times 284$ \\
\hline
71 & 8 & 2&6&23&9&9.2608&$1633 \times 213$ \\
\hline
71 & 8 & 3&7&15&9&9.4666&$1063 \times 142$ \\
\hline
71 & 8 & 4,5,6,7,8&1&7&9&9.1428&$497 \times 71$ \\
\hline
71 & 9 &1,2,$\ldots$,9&1&7&10&10.1428&$497 \times 71$ \\
\hline
71 & 10 & 1&3&32&11&11.0937&$2272 \times 355$ \\
\hline
71 & 10 & 2&4&19&11&11.2105&$1349 \times 213$ \\
\hline
71 & 10 & 3,4,$\ldots$,10&5&6&11&11.8333&$426 \times 71$ \\
\hline
71 & 11 & 1&6&35&12&12.1714&$2485 \times 426$ \\
\hline
71 & 11 & 2&8&23&12&12.3478&$1633 \times 284$ \\
\hline
71 & 11 & 3&9&17&12&12.5294&$1207 \times 213$ \\
\hline
71 & 11 &4,5&10&11&12&12.9090&$781 \times 142$ \\
\hline
71 & 11 & 6,7,$\ldots$,11&11&5&12&14.2000&$355 \times 71$ \\
\hline
71 & 12 &1&4&27&13&13.1481&$1917 \times 355$ \\
\hline
71 & 12 & 2,3&5&16&13&13.3125&$1136 \times 213$ \\
\hline
71 & 12 & 4,5,$\ldots$,12&6&5&13&14.2000&$355 \times 71$ \\
\hline
71 & 13 & 1,2,$\ldots$,13&1&5&14&14.2000&$355 \times 71$ \\
\hline
71 & 14 &1&2&33&15&15.0606&$2343 \times 497$ \\
\hline
71 & 14 &2,3,4&3&14&15&15.2142&$994 \times 213$ \\
\hline
71 & 14 & 5,6&7&9&15&15.7777&$639 \times 142$ \\
\hline
71 & 14 & 7,8,$\ldots$,14&11&4&15&17.7500&$284 \times 71$ \\
\hline
71 & 15 & 1&1&31&16&16.0322&$2201 \times 497$ \\
\hline
71 & 15 & 2&3&22&16&16.1363&$1562 \times 355$ \\
\hline
71 & 15 & 3,4&5&13&16&16.3846&$923 \times 213$ \\
\hline
71 & 15 &5,6,$\ldots$,15&7&4&16&17.7500&$284 \times 71$ \\
\hline
\end{tabular}
\vspace{5pt}
\caption{$D+1+\frac{a}{b}$ for $K=71$ and $U \leq D \leq 15$.}
\label{table1}
\hrule
\end{table*}

\section{Low Complexity Decoding of vector linear index codes for SUICP-SNI}
\label{sec3}

In \cite{VaR4}, we gave a low-complexity decoding for scalar linear index codes for SUICP-SNI with AIR matrix as encoding matrix. The low complexity decoding method helps to identify a reduced set of side-information for each users with which the decoding can be carried out. By this method every receiver is able to decode its wanted message symbol by simply adding some index code symbols (broadcast symbols). The low-complexity decoding is generalized to vector linear index codes of SUICP-SNI in this section.

Consider a $(K,D,U)$ SUICP-SNI. Let $(a,b)\in \mathbf{S}_{K,D,U}$ defined in Definition \ref{def1}. In Theorem \ref{thm1}, we proved that the rate $D+1+\frac{a}{b}$ can be achieved by using the AIR matrix of size $Kb \times (b(D+1)+a)$ as encoding matrix. Let 
\begin{align}
\label{mn}
m=Kb~~~\text{and}~~~n=b(D+1)+a.
\end{align}

The submatrices of an AIR matrix are classified in to the following three types. 
\begin{itemize}
\item The first submatrix is the $I_{n \times n}$ matrix at the top of Fig. \ref{fig1} which is independent of $\lambda_i,\beta_i,$  $i \in [0:l].$ This will be referred as the $I_{n}$ matrix henceforth.
\item The set of matrices of the form $I_{\lambda_i \times \beta_i \lambda_i}$ for $i=0,2,4, \cdots$ (for all $i$ even) will be referred as the set of even-submatrices.
\item The set of matrices of the form $I_{\beta_i \lambda_i \times  \lambda_i}$ for $i=1,3,5, \cdots$ (for all $i$ odd) will be referred as the set of odd-submatrices. 
\end{itemize}
Note that the odd-submatrices are always "fat" and the even-submatrices are always "tall" including square matrices in both the sets. 
By the $i$-th submatrix is meant either an odd-submatrix or an even-submatrix for $ 0 \leq i \leq l.$  Also whenever  $\beta_0=0,$  the corresponding submatrix will not exist in the AIR matrix.  
To prove the main result in the following section the location of both the odd- and even-submatrices within the AIR matrix need to be identified. Towards this end, we define the following intervals. Let $R_0,R_1,R_2,\ldots,R_{\left\lfloor \frac{l}{2}\right\rfloor+1}$ be the intervals that will identify the rows of the submatrices  as given below:

\begin{itemize}
\item $R_0=[0:m-\lambda_0-1]$
\item $R_1=[m-\lambda_0:m-\lambda_2-1]$
\item $R_2=[m-\lambda_2:m-\lambda_4-1]$
\item []~~~~~~~~~~~~~~$\vdots$
\item $R_i=[m-\lambda_{2(i-1)}:m-\lambda_{2i}-1]$
\item []~~~~~~~~~~~~~~$\vdots$
\item $R_{\left\lfloor
\frac{l}{2}\right\rfloor}=[m-\lambda_{2(\left\lfloor
\frac{l}{2}\right\rfloor-1)}:m-\lambda_{2\left\lfloor
\frac{l}{2}\right\rfloor}-1]$
\item $R_{\left\lfloor \frac{l}{2}\right\rfloor+1}=[m-\lambda_{2\left\lfloor \frac{l}{2}\right\rfloor}:m-1]$,
\end{itemize} 
we have $R_0\cup R_1 \cup R_2 \cup \ldots \cup R_{\left\lfloor \frac{l}{2}\right\rfloor +1}=[0:m-1]$.

Let $C_0,C_1,\ldots,C_{\left\lceil \frac{l}{2}\right\rceil}$ be the intervals that will identify the columns of the submatrices  as given below:
\begin{itemize}

\item $C_0=[0:\beta_0\lambda_0-1]$ if $\beta_0 \geq 1$, else $C_0=\phi$
\item $C_1=[n-\lambda_1:n-\lambda_3-1]$
\item $C_2=[n-\lambda_3:n-\lambda_5-1]$
\item [] ~~~~~~~~~~~$\vdots$
\item $C_i=[n-\lambda_{2i-1}:n-\lambda_{2i+1}-1]$
\item []~~~~~~~~~~~~~~$\vdots$
\item $C_{\left\lceil
\frac{l}{2}\right\rceil-1}=[n-\lambda_{2\left\lceil
\frac{l}{2}\right\rceil-3}:n-\lambda_{2\left\lceil
\frac{l}{2}\right\rceil-1}-1]$
\item $C_{\left\lceil \frac{l}{2}\right\rceil}=[n-\lambda_{2\left\lceil \frac{l}{2}\right\rceil-1}:n-1]$
\end{itemize} 
we have $C_0\cup C_1 \cup C_2 \cup \ldots \cup C_{\left\lceil \frac{l}{2}\right\rceil}=[0:n-1]$.

Let $\mathbf{L}$ be the AIR matrix of size $m \times n$. In the matrix $\mathbf{L}$, the element $\mathbf{L}(j,k)$ is present in one of the submatrices: $\mathbf{I}_{n}$ or $\mathbf{I}_{\beta_{2i+1}\lambda_{2i+1} \times \lambda_{2i+1}}$ for $i \in [0:\lceil \frac{l}{2}\rceil-1]$ or $\mathbf{I}_{\lambda_{2i} \times \beta_{2i}\lambda_{2i}}$ for $i \in [0:\left\lfloor\frac{l}{2}\right\rfloor]$. Let $(j_R,k_R)$ be the (row-column) indices of $\mathbf{L}(j,k)$ within the submatrix in which $\mathbf{L}(j,k)$ is present. Then, for a given $\mathbf{L}(j,k)$, the indices $j_R$ and $k_R$ are as given below.
\begin{itemize}
\item If $\mathbf{L}(j,k)$ is present in $\mathbf{I}_{n}$, then $j_R=j$ and $k_R=k$.
\item If $\mathbf{L}(j,k)$ is present in $\mathbf{I}_{\lambda_{0} \times \beta_{0}\lambda_{0}}$, then   $j_R=j~\text{\textit{mod}}~(n)$ and $k_R=k$. 
\item If $L(j,k)$ is present in $\mathbf{I}_{\beta_{2i+1}\lambda_{2i+1} \times \lambda_{2i+1}}$ for $i \in [0:\lceil \frac{l}{2}\rceil-1]$, then \\ $j_R=j~\text{\textit{mod}}~(m-\lambda_{2i})$ and $k_R=k~\text{\textit{mod}}~(n-\lambda_{2i+1})$.
\item If $L(j,k)$ is present in $\mathbf{I}_{\lambda_{2i} \times \beta_{2i}\lambda_{2i}}$ for $i \in [1:\left\lfloor\frac{l}{2}\right\rfloor]$, then $j_R=j~\text{\textit{mod}}~(m-\lambda_{2i})$ and $k_R=k~\text{\textit{mod}}~(n-\lambda_{2i-1})$. 
\end{itemize}

In Definition \ref{defv1} below we define several distances between the $1$s present in an AIR matrix. These distances are used to prove the low complexity decoding for SUICP-SNI.  Figure \ref{sfig44} is useful to visualize the distances defined.

\begin{definition}
\label{defv1}
Let $\mathbf{L}$ be the AIR matrix of size $m \times n$.
\begin{itemize}
\item [\textbf{(i)}] For $k\in [0:n-1]$ we have  $\mathbf{L}(k,k)=1.$  Let $k^{\prime}$ be the maximum integer such that $k^{\prime} > k$ and $\mathbf{L}(k^{\prime},k)=1$. Then $k^{\prime}-k,$ denoted by $d_{down}(k),$  is called the down-distance of $\mathbf{L}(k,k)$.  
\item [\textbf{(ii)}] Let $\mathbf{L}(j,k)=1$ and $j \geq n$. Let $j^{\prime}$ be the maximum integer such that $j^{\prime} < j$ and $\mathbf{L}(j^{\prime},k)=1$. Then $j-j^{\prime},$ denoted by $d_{up}(j,k),$ is called the up-distance of $\mathbf{L}(j,k).$ 

\item [\textbf{(iii)}]  Let $\mathbf{L}(j,k)=1$ and $\mathbf{L}(j,k)\in \mathbf{I}_{ \lambda_{2i} \times \beta_{2i} \lambda_{2i}}$ for $i \in [0:\lfloor \frac{l}{2}\rfloor]$. Let $k^{\prime}$ be the minimum integer such that $k^{\prime} > k$ and $\mathbf{L}(j,k^{\prime})=1$. Then $k^{\prime}-k,$ denoted by $d_{right}(j,k),$  is called the right-distance of $\mathbf{L}(j,k).$  

\item [\textbf{(iv)}] For $k\in [0:n-\lambda_l-1]$, let $d_{right}(k+d_{down}(k),k)=\mu_k.$  Let the number of $1$s in the $(k+\mu_k)$th column of $\mathbf{L}$ below $\mathbf{L}(k+d_{down}(k),k+\mu_k)$ be $p_k$ and these are  at a distance of $t_{k,1},t_{k,2},\ldots,t_{k,p_k}~(t_{k,1}<t_{k,2}<\ldots <t_{k,p_k})$ from $\mathbf{L}(k+d_{down}(k),k+\mu_k)$. Then, $t_{k,\tau}$ is called the $\tau$-th  down-distance of $\mathbf{L}(k+d_{down}(k),k+\mu_k)$, for $1 \leq \tau \leq p_k.$
\end{itemize}
\end{definition}

Notice that if $L(k+d_{down}(k),k+\mu_k) \in \mathbf{I}_{\lambda_{2i} \times   \beta_{2i} \lambda_{2i}}$ in $\mathbf{L}$ for $i \in [0:\lfloor \frac{l}{2}\rfloor]$, then $p_k=0$.

\begin{figure}[ht]
\centering
\includegraphics[scale=0.62]{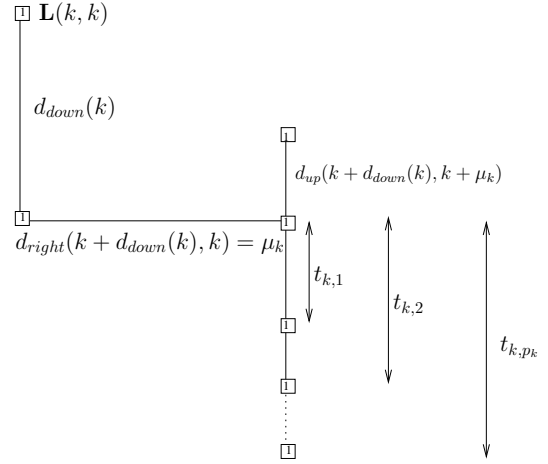}\\
\caption{Illustration of Definition \ref{defv1}}
\label{sfig44}
\end{figure}

\begin{lemma}
\label{lemmav1}
Let $k \in C_i$ for $i \in [0:\lceil\frac{l}{2}\rceil]$. Let $k~\text{mod}~(n-\lambda_{2i-1})=c\lambda_{2i}+d$ for some positive integers $c$ and $d$ $(d<\lambda_{2i})$. The down distance is given by
\begin{align}
\label{mdd}
d_{down}(k)=m-n+\lambda_{2i+1}+(\beta_{2i}-1-c)\lambda_{2i}.
\end{align}
\end{lemma}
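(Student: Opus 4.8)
The plan is to read off $d_{down}(k)$ by locating two entries of the AIR matrix $\mathbf{L}$ of size $m\times n$: the diagonal entry $\mathbf{L}(k,k)=1$, which lies in the top block $\mathbf{I}_n$ and hence sits in row $k$, and the bottom-most $1$ of column $k$; then $d_{down}(k)$ is the difference of the two row indices, so the real task is to pin down the bottom-most $1$ of column $k$.

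First I would record where the even submatrix indexed by $2i$ sits inside $\mathbf{L}$: from Algorithm \ref{algo2} (equivalently, from Fig. \ref{fig1} together with the index-translation rules stated just before the lemma), $\mathbf{I}_{\lambda_{2i}\times\beta_{2i}\lambda_{2i}}$ occupies exactly the columns $C_i=[n-\lambda_{2i-1}:n-\lambda_{2i+1}-1]$ and the rows $[m-\lambda_{2i}:m-1]$, and equals the horizontal concatenation of $\beta_{2i}$ copies of $\mathbf{I}_{\lambda_{2i}}$ (consistently, its width $\lambda_{2i-1}-\lambda_{2i+1}$ equals $\beta_{2i}\lambda_{2i}$ by the chain relation $\lambda_{2i-1}=\beta_{2i}\lambda_{2i}+\lambda_{2i+1}$). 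If a self-contained argument for these windows is wanted, they follow by induction on the number of passes of Algorithm \ref{algo2}, since each pass peels off a band of stacked or juxtaposed identity blocks and recurses on the shrunken window $m\leftarrow r$, $n\leftarrow r'$, producing precisely the bands $R_j$ and $C_j$. I would then argue that for $k\in C_i$ the bottom-most $1$ of column $k$ lies in this submatrix: the column ranges $C_0,C_1,\ldots$ of the even submatrices are pairwise disjoint, while the column range of the odd submatrix of index $2j-1$ is the nested interval $[n-\lambda_{2j-1}:n-1]$, which contains $k$ only for $j\le i$, and for those $j$ the odd submatrix sits in rows $[m-\lambda_{2j-2}:m-\lambda_{2j}-1]$, all of which are $<m-\lambda_{2i}$ because $\lambda_{2j}\ge\lambda_{2i}$; since $\mathbf{I}_{\lambda_{2i}\times\beta_{2i}\lambda_{2i}}$ reaches down to row $m-1$, nothing in column $k$ lies below it.

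The computation is then short. Inside $\mathbf{I}_{\lambda_{2i}\times\beta_{2i}\lambda_{2i}}$ the local column index of $k$ is $k\bmod(n-\lambda_{2i-1})=c\lambda_{2i}+d$ by hypothesis, so it falls in the $(c{+}1)$-st copy of $\mathbf{I}_{\lambda_{2i}}$ and the unique $1$ of column $k$ there has local row $d$, hence global row $(m-\lambda_{2i})+d$. Since this submatrix begins at column $n-\lambda_{2i-1}$, the monotonicity of $(\lambda_j)$ and the relations $\lambda_{j-1}\ge\lambda_j+\lambda_{j+1}$ force $\lfloor k/(n-\lambda_{2i-1})\rfloor=1$ for $k\in C_i$, i.e.\ $k=(n-\lambda_{2i-1})+c\lambda_{2i}+d$. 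Therefore
\begin{align*}
d_{down}(k)&=\big[(m-\lambda_{2i})+d\big]-k=(m-\lambda_{2i})+d-\big[(n-\lambda_{2i-1})+c\lambda_{2i}+d\big]\\
&=m-n+\lambda_{2i-1}-(c+1)\lambda_{2i}=m-n+\lambda_{2i+1}+(\beta_{2i}-1-c)\lambda_{2i},
\end{align*}
the last equality again using $\lambda_{2i-1}=\beta_{2i}\lambda_{2i}+\lambda_{2i+1}$.

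I expect the genuine obstacle to be the bookkeeping in the second paragraph: rigorously placing $\mathbf{I}_{\lambda_{2i}\times\beta_{2i}\lambda_{2i}}$ at rows $[m-\lambda_{2i}:m-1]$ and columns $C_i$, and the odd submatrices at their stated positions, inside the recursively built $\mathbf{L}$ — this is exactly the content that the $R_j$, $C_j$ intervals and the $(j_R,k_R)$ rules preceding the lemma are designed to carry. A few degenerate situations need a separate but immediate check: $i=0$ and, likewise, $\beta_0=0$ with $i=1$ (then $n-\lambda_{2i-1}=0$, so the column offset is $0$ and ``$k\bmod 0$'' is read as $k$); the boundary values $c=\beta_{2i}-1$ or $d=0$ (an edge of a copy of $\mathbf{I}_{\lambda_{2i}}$); and, when $l$ is odd, $k$ in the last interval $C_{\lceil l/2\rceil}$ with ``$2i=l+1$'', where one instead reads the lowest $1$ off the last (odd) submatrix $\mathbf{I}_{\beta_l\lambda_l\times\lambda_l}$ and the same arithmetic with $\lambda_{l+1}=0$ yields $d_{down}(k)=m-n$.
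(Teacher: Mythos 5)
Your proposal is correct and follows essentially the same route as the paper's Appendix~A proof: identify the bottom-most $1$ of column $k$ as lying in the even submatrix $\mathbf{I}_{\lambda_{2i}\times\beta_{2i}\lambda_{2i}}$ at local row $d$ (global row $m-\lambda_{2i}+d$), subtract $k=(n-\lambda_{2i-1})+c\lambda_{2i}+d$, and rewrite via $\lambda_{2i-1}=\beta_{2i}\lambda_{2i}+\lambda_{2i+1}$, with the $l$-odd last interval handled separately exactly as in the paper's Case~(ii). The only difference is presentational: the paper reads the three partial distances $d_1,d_2,d_3$ off its figure, whereas you justify the submatrix placements and the location of the lowest $1$ explicitly, which is a slightly more careful rendering of the same argument.
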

\begin{proof}
Proof is given in Appendix A.
\end{proof}

\begin{lemma}
\label{lemmav2}
The up-distance of $\mathbf{L}(j,k)$ is as given below.
\begin{itemize}
\item If $\mathbf{L}(j,k) \in \mathbf{I}_{\beta_{2i+1} \lambda_{2i+1} \times \lambda_{2i+1}}$  for $i \in [0:\lceil \frac{l}{2}\rceil-1]$, then $d_{up}(j,k)$ is $\lambda_{2i+1}$. 
\item If $\mathbf{L}(j,k) \in \mathbf{I}_{\lambda_{2i} \times  \beta_{2i} \lambda_{2i}}$  for $i\in [0:\lfloor \frac{l}{2}\rfloor]$ and $k_R=c\lambda_{2i}+d$ for some positive integer $c,d~(d < \lambda_{2i})$, then $d_{up}(j,k)$ is $\lambda_{2i-1}-c\lambda_{2i}$. 
\end{itemize}
\end{lemma}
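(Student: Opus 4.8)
The plan is to compute the up-distance directly from the block structure of the AIR matrix, treating the two cases (an odd-submatrix $\mathbf{I}_{\beta_{2i+1}\lambda_{2i+1}\times\lambda_{2i+1}}$ and an even-submatrix $\mathbf{I}_{\lambda_{2i}\times\beta_{2i}\lambda_{2i}}$) separately, exactly as in the statement. Recall from \eqref{rcmatrix} that $\mathbf{I}_{c\times d}$ is a vertical stack of $c/d$ copies of $\mathbf{I}_d$, and $\mathbf{I}_{d\times c}$ is its transpose, i.e.\ a horizontal concatenation of $c/d$ copies of $\mathbf{I}_d$. So within such a block the $1$s in any fixed column are spaced periodically, and the up-distance of $\mathbf{L}(j,k)$ is governed by (a) the periodicity inside the block containing $\mathbf{L}(j,k)$, unless $\mathbf{L}(j,k)$ is the topmost $1$ of its block, in which case it is governed by (b) the vertical gap from the top of that block up to the nearest $1$ in the same column lying in the block immediately above.

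For the first bullet, suppose $\mathbf{L}(j,k)\in\mathbf{I}_{\beta_{2i+1}\lambda_{2i+1}\times\lambda_{2i+1}}$. This is a "fat'' odd-submatrix sitting in the rows $R$ indexed by $[m-\lambda_{2i}:m-\lambda_{2i+2}-1]$ and columns $C_{i+1}$; as a horizontal stack of $\beta_{2i+1}$ identity blocks $\mathbf{I}_{\lambda_{2i+1}}$ it has exactly one $1$ per row but, read column-wise, the block is only $\beta_{2i+1}\lambda_{2i+1}$ rows tall while each column of the block carries a single $1$. Here the relevant $1$ above $\mathbf{L}(j,k)$ in column $k$ is not inside this block at all but in the even-submatrix $\mathbf{I}_{\lambda_{2i}\times\beta_{2i}\lambda_{2i}}$ directly above it; using the row-index reductions $j_R=j\bmod(m-\lambda_{2i})$, $k_R=k\bmod(n-\lambda_{2i+1})$ recorded before Definition~\ref{defv1} together with the position of the $1$ in $\mathbf{I}_{\lambda_{2i}\times\beta_{2i}\lambda_{2i}}$, one reads off that the gap is precisely $\lambda_{2i+1}$. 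I would make this explicit by writing down the row index of the unique $1$ in column $k$ within the even-block above and subtracting, the cancellation leaving $\lambda_{2i+1}$.

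For the second bullet, suppose $\mathbf{L}(j,k)\in\mathbf{I}_{\lambda_{2i}\times\beta_{2i}\lambda_{2i}}$ with $k_R=c\lambda_{2i}+d$, $d<\lambda_{2i}$. This is a "tall'' even-submatrix, a \emph{vertical} stack of $\beta_{2i}$ copies of $\mathbf{I}_{\lambda_{2i}}$, so within the block the $1$s in column $k_R$ recur with period $\lambda_{2i}$; if $\mathbf{L}(j,k)$ is not the top occurrence, the up-distance is $\lambda_{2i}$, which matches the claimed $\lambda_{2i-1}-c\lambda_{2i}$ only when $c$ is maximal — so the real content is the case where $\mathbf{L}(j,k)$ is the topmost $1$ of its column within this block, and the nearest $1$ above lies in the odd-submatrix $\mathbf{I}_{\beta_{2i-1}\lambda_{2i-1}\times\lambda_{2i-1}}$ sitting immediately above (whose column range is $C_i$, row range ending at $m-\lambda_{2i}-1$). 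Using $j_R=j\bmod(m-\lambda_{2i})$ and $k_R=k\bmod(n-\lambda_{2i-1})$, and the fact that $\lambda_{2i-1}=\beta_{2i}\lambda_{2i}+\lambda_{2i+1}$ from \eqref{chain}, the distance from the top of the current block to that $1$ works out to $\lambda_{2i-1}-c\lambda_{2i}$ after substituting $k_R=c\lambda_{2i}+d$. I expect the main obstacle to be purely bookkeeping: correctly tracking which of the two "above'' blocks supplies the nearest $1$, and keeping the modular row/column reductions consistent with the chain relations \eqref{chain}; the proof itself will likely be relegated to the appendix alongside Lemma~\ref{lemmav1}, since it is the same style of computation on the block decomposition of $\mathbf{L}$.
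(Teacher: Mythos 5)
Your overall plan---reading off $d_{up}$ from the block decomposition of the AIR matrix, splitting into ``nearest $1$ inside the same block'' versus ``nearest $1$ in the block above''---is the right one (the paper itself only cites Appendix~C of \cite{VaR3} for this computation). But you have the orientations of the two families of submatrices swapped, and this breaks both cases. From \eqref{rcmatrix}, $\mathbf{I}_{c\times d}$ has $c$ rows and $d$ columns, so the odd-submatrix $\mathbf{I}_{\beta_{2i+1}\lambda_{2i+1}\times\lambda_{2i+1}}$ is a \emph{vertical} stack of $\beta_{2i+1}$ copies of $\mathbf{I}_{\lambda_{2i+1}}$: each of its columns contains $\beta_{2i+1}$ ones spaced $\lambda_{2i+1}$ apart. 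The even-submatrix $\mathbf{I}_{\lambda_{2i}\times\beta_{2i}\lambda_{2i}}$ is a \emph{horizontal} concatenation of $\beta_{2i}$ copies of $\mathbf{I}_{\lambda_{2i}}$: each of its columns contains exactly one $1$. (This is forced by Lemma~\ref{lemmav3}, where $k_R$ ranges over $[0:\beta_{2i}\lambda_{2i}-1]$ inside the even-submatrix; the paper's parenthetical ``odd-submatrices are fat, even-submatrices are tall'' is inconsistent with \eqref{rcmatrix} and seems to have misled you.)

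Consequences. In the first bullet you assert each column of the odd-submatrix carries a single $1$ and that the nearest $1$ above lies in the even-submatrix $\mathbf{I}_{\lambda_{2i}\times\beta_{2i}\lambda_{2i}}$; in fact that even-submatrix occupies rows $[m-\lambda_{2i}:m-1]$ and columns $[n-\lambda_{2i-1}:n-\lambda_{2i+1}-1]$, i.e., it sits \emph{beside} the odd-submatrix $2i+1$ (rows $[m-\lambda_{2i}:m-\lambda_{2i+2}-1]$, columns $[n-\lambda_{2i+1}:n-1]$) in a disjoint column range, so it is irrelevant to $d_{up}$ in column $k$. The correct argument is: for a $1$ not in the topmost copy of $\mathbf{I}_{\lambda_{2i+1}}$ the nearest $1$ above is inside the same block at distance $\lambda_{2i+1}$; for a $1$ in the topmost copy it is in the bottom copy of $\mathbf{I}_{\lambda_{2i-1}}$ of the odd-submatrix $2i-1$ (or in $\mathbf{I}_n$ when $i=0$), at global row $m-\lambda_{2i}-n+k$ versus $m-\lambda_{2i}-n+k+\lambda_{2i+1}$, again a gap of $\lambda_{2i+1}$. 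In the second bullet the ``period-$\lambda_{2i}$ recurrence within the block'' does not exist---each column of the fat even-submatrix has a single $1$---so your spurious subcase with $d_{up}=\lambda_{2i}$ (which would contradict the lemma and which you wave away with ``only when $c$ is maximal'') never arises; the nearest $1$ above is always the one in the bottom copy of $\mathbf{I}_{\lambda_{2i-1}}$ of odd-submatrix $2i-1$ at global row $m-\lambda_{2i}-\lambda_{2i-1}+k_R$, giving $d_{up}=(m-\lambda_{2i}+d)-(m-\lambda_{2i}-\lambda_{2i-1}+c\lambda_{2i}+d)=\lambda_{2i-1}-c\lambda_{2i}$, with the $i=0$ case handled by $\lambda_{-1}=n$ and the $\mathbf{I}_n$ block on top. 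You do land on the stated formulas, but by asserting them; the geometric bookkeeping as you describe it would not produce them.
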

\begin{proof}
Proof  is available in Appendix C of \cite{VaR3}.
\end{proof}

\begin{lemma}
\label{lemmav3}
The right-distance of $\mathbf{L}(j,k)$ is as given below.
\begin{itemize}
\item If $k_R \in [0:(\beta_{2i}-1)\lambda_{2i}-1]$ for $i \in [0:\lfloor \frac{l}{2}\rfloor]$, then $d_{right}(j,k)$ is $\lambda_{2i}$. 
\item If $k_R \in [(\beta_{2i}-1)\lambda_{2i}:\beta_{2i} \lambda_{2i}-1]$ for $i \in [0:\lfloor \frac{l}{2}\rfloor-1]$, then $d_{right}(j,k)$ depends on $j_R$. If $j_R=c\lambda_{2i+1}+d$ for some positive integers $c,d~(d<\lambda_{2i+1})$, then  $d_{right}(j,k)$ is $\lambda_{2i}-c\lambda_{2i+1}$. 
\end{itemize}
\end{lemma}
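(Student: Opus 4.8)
The plan is to read off, from the block decomposition of the AIR matrix, exactly where the first $1$ to the right of $\mathbf{L}(j,k)$ sits in row $j$. The facts I would set up first are the same block-location facts that underlie the proofs of Lemmas~\ref{lemmav1} and \ref{lemmav2} (the quantities involved are illustrated in Fig.~\ref{sfig44}): the even-submatrix $\mathbf{I}_{\lambda_{2i}\times\beta_{2i}\lambda_{2i}}$ is $\beta_{2i}$ copies of $\mathbf{I}_{\lambda_{2i}}$ placed side by side, it occupies the column interval $C_i=[n-\lambda_{2i-1}:n-\lambda_{2i+1}-1]$ and the rows $[m-\lambda_{2i}:m-1]$, and inside it the row of local index $j_R\in[0:\lambda_{2i}-1]$ carries its $1$s precisely at the local columns $j_R,\ j_R+\lambda_{2i},\ \dots,\ j_R+(\beta_{2i}-1)\lambda_{2i}$. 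I would then repeatedly use the chain relations $\lambda_{2i-1}=\beta_{2i}\lambda_{2i}+\lambda_{2i+1}$ and $\lambda_{2i}=\beta_{2i+1}\lambda_{2i+1}+\lambda_{2i+2}$ to translate between local and global column indices.

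The first bullet is then immediate: if $k_R\in[0:(\beta_{2i}-1)\lambda_{2i}-1]$ then $k_R=j_R+t\lambda_{2i}$ with $t\le\beta_{2i}-2$, so the $1$ at local column $k_R+\lambda_{2i}$ still lies in the same even-submatrix, and all columns strictly between it and $k$ lie in $C_i$ and carry a $0$ in row $j$; hence $d_{right}(j,k)=\lambda_{2i}$.

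The real work is the second bullet. Here $k_R\in[(\beta_{2i}-1)\lambda_{2i}:\beta_{2i}\lambda_{2i}-1]$, so $\mathbf{L}(j,k)$ lies in the \emph{last} copy of $\mathbf{I}_{\lambda_{2i}}$, i.e.\ $k_R=j_R+(\beta_{2i}-1)\lambda_{2i}$; passing to a global index and simplifying with $\lambda_{2i-1}=\beta_{2i}\lambda_{2i}+\lambda_{2i+1}$ yields $k=n-\lambda_{2i}-\lambda_{2i+1}+j_R$, so the next $1$ of row $j$ must be searched for in columns $[n-\lambda_{2i+1}:n-1]$. Writing $j_R=c\lambda_{2i+1}+d$ with $d<\lambda_{2i+1}$, the bound $j_R\le\lambda_{2i}-1=\beta_{2i+1}\lambda_{2i+1}+\lambda_{2i+2}-1$ forces $c\le\beta_{2i+1}$, leaving two subcases. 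If $c\le\beta_{2i+1}-1$, then $j$ lies among the rows $R_{i+1}$ that carry the odd-submatrix $\mathbf{I}_{\beta_{2i+1}\lambda_{2i+1}\times\lambda_{2i+1}}$, whose unique $1$ in row $j$ is at global column $n-\lambda_{2i+1}+d$. If $c=\beta_{2i+1}$, then $j$ lies among the rows $[m-\lambda_{2i+2}:m-1]$ that carry the next even-submatrix $\mathbf{I}_{\lambda_{2i+2}\times\beta_{2i+2}\lambda_{2i+2}}$, with local row $d$, whose leftmost $1$ is again at global column $n-\lambda_{2i+1}+d$ (and $\lambda_{2i}-\beta_{2i+1}\lambda_{2i+1}=\lambda_{2i+2}$ reconciles the two subcases). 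In both subcases, no column strictly between $k$ and $n-\lambda_{2i+1}+d$ carries a $1$ in row $j$, so $d_{right}(j,k)=(n-\lambda_{2i+1}+d)-k=\lambda_{2i}+d-j_R=\lambda_{2i}-c\lambda_{2i+1}$.

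The routine part is the index translations; the main obstacle is exactly this subcase split — I need to check that once $\mathbf{L}(j,k)$ sits in the last copy of $\mathbf{I}_{\lambda_{2i}}$ the search for the next $1$ drops into precisely one of the two neighbouring submatrices, and, crucially, that both possibilities hit the same column $n-\lambda_{2i+1}+d$, which is what makes the single expression $\lambda_{2i}-c\lambda_{2i+1}$ valid for every admissible $j_R$. Finally I would record the degenerate cases: the restriction $i\le\lfloor l/2\rfloor-1$ in the second bullet is exactly what guarantees that the odd-submatrix $\mathbf{I}_{\beta_{2i+1}\lambda_{2i+1}\times\lambda_{2i+1}}$ and the even-submatrix $\mathbf{I}_{\lambda_{2i+2}\times\beta_{2i+2}\lambda_{2i+2}}$ both exist, while when $\beta_0=0$ the $i=0$ statements are vacuous.
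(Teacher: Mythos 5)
Your proof is correct. The paper does not prove Lemma~\ref{lemmav3} here (it only cites Appendix~D of \cite{VaR3}), but your block-location argument is the natural one and is sound: the first bullet follows immediately from the horizontal concatenation of $\beta_{2i}$ copies of $\mathbf{I}_{\lambda_{2i}}$, and for the second bullet you correctly identify that the entry sits in the last copy, reduce to $k=n-\lambda_{2i}-\lambda_{2i+1}+j_R$, and handle the essential subcase split ($c\le\beta_{2i+1}-1$ landing in the odd-submatrix $\mathbf{I}_{\beta_{2i+1}\lambda_{2i+1}\times\lambda_{2i+1}}$ versus $c=\beta_{2i+1}$ landing in $\mathbf{I}_{\lambda_{2i+2}\times\beta_{2i+2}\lambda_{2i+2}}$), both of which place the next $1$ at global column $n-\lambda_{2i+1}+d$, giving $d_{right}(j,k)=\lambda_{2i}-c\lambda_{2i+1}$. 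Your closing remarks on the degenerate cases ($i\le\lfloor l/2\rfloor-1$ and $\beta_0=0$) are also the right caveats.
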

\begin{proof}
Proof  is available in Appendix D of \cite{VaR3}.
\end{proof}

From Euclid algorithm and \eqref{chain}, we can write
\begin{align}
\label{chain2}
\lambda_l=\text{gcd}(m,n).
\end{align}
From \eqref{chain}, we have 
\begin{align}
\label{chain3}
\nonumber
&\lambda_0 > \lambda_1 >\ldots>\lambda_{2i}>\ldots>\lambda_l=\text{gcd}(m,n)~\text{and} \\& 
\lambda_{2i-1}-c\lambda_{2i} \geq \lambda_{2i+1} \geq \lambda_{l} = \text{gcd}(m,n)
\end{align}
for $i \in [0:\lfloor \frac{l}{2}\rfloor]$ and $c \leq \beta_{2i}$.

Define 
\begin{align*}
\tilde{C}_i=\{x+\lambda_0: \forall x \in C_i\}~\text{for}~i\in [0:\left \lceil\frac{l}{2}\right \rceil].
\end{align*}
That is, 
\begin{align}
\label{fact0}
\tilde{C}_i=[m-\lambda_{2i-1}:m-\lambda_{2i+1}-1].
\end{align}
 We have $C_0\cup C_1 \cup \ldots \cup C_{\left\lceil\frac{l}{2}\right\rceil}=[0:n-1]$, hence  
\begin{align*}
\tilde{C}_0 \cup \tilde{C}_1 \cup \ldots \cup \tilde{C}_{\left\lceil\frac{l}{2}\right\rceil}=[\lambda_0:m-1].
\end{align*}

It turns out that the interval $\tilde{C}_{i}$ defined in \eqref{fact0} for $i \in [0:\left\lceil\frac{l}{2}\right\rceil]$
needs to be partitioned into two  as $\tilde{C}_{i} =\tilde{D}_{i}\cup \tilde{E}_{i}$ as given below to prove the main result Theorem \ref{thm1}. Let 
\begin{align}
\label{fact12}
\nonumber
&\tilde{D}_i=[m-\lambda_{2i-1}:m-\lambda_{2i-1}+(\beta_{2i}-1)\lambda_{2i}-1] \\&
\tilde{E}_i=[m-\lambda_{2i-1}+(\beta_{2i}-1)\lambda_{2i}:m-\lambda_{2i+1}-1]. 
\end{align}
for $i \in [0:\left\lceil\frac{l}{2}\right\rceil]$.

A $b$-dimensional vector linear index code of length $n$ generated by an AIR matrix of size $m \times n$ is given by
\begin{align}
\nonumber
 [c_0~c_1~\ldots~c_{n-1}]&=[x_{0,1}~x_{0,2}~\ldots~x_{0,b}~x_{1,1}~\ldots~x_{K-1,b}]\mathbf{L}\\&
 =\sum_{t=0}^{K-1}\sum_{j=1}^{b}x_{t,j}L_{tb+j-1}.
\end{align}
For $k \in [0:m-1]$, let 
\begin{align*}
k_b=\left \lfloor \frac{k}{b} \right \rfloor~\text{and}~k_r=(k~\text{mod}~b)+1,
\end{align*}
i.e., the $k$th row of the matrix $\mathbf{L}$ contains the coefficients used for mixing the message symbol $x_{k_b,k_r}$ in $n$ code symbols.

\begin{theorem}
\label{thmv1}
Let $\mathbf{L}$ be the AIR matrix of size $m \times n$. Let $d_{right}(k+d_{down}(k),k)=\mu_k.$ Also let $t_{k,\tau}$ be the $\tau$-th  down distance of $\mathbf{L}(k+d_{down}(k),k+\mu_k),$ where $1 \leq \tau \leq p_k.$  Let $\tilde{D}_i$ and $\tilde{E}_i$ be the sets as given in \eqref{fact12}. In this SUICP-SNI, receiver $R_t$ wants to decode its $b$ wanted message symbols $x_{t,j}$ for $t \in [0:K-1]$ and $j \in [1:b]$. Let $k=bt+j-1$ and $k^{\prime}=k-\lambda_0$. Then, the receiver $R_t$ can decode its wanted message symbol $x_{t,j}$ as given below.
\begin{itemize}
\item [\textbf{(i)}] If $k \in [0:m-n-1]$, then $R_t$ can decode $x_{t,j}$ from the broadcast symbol $c_{k~\text{mod}~n}$. Receiver $R_t$ can decode $x_{t,j}$ as 
\begin{align}
\label{case4}
 x_{t,j}=c_{k~\text{mod}~n}+\nu_{k~\text{mod}~n},
 \end{align}
where the side-information term $\nu_{k~\text{mod}~n}$ is the exclusive OR of all message symbols present in $c_{k~\text{mod}~n}$ excluding $x_{t,j}$.

\item [\textbf{(ii)}] If $k \in \tilde{D}_i$ for $i \in [0:\lceil \frac{l}{2}\rceil]$, then $R_t$ can decode $x_{t,j}$ by using the two broadcast symbols $c_{k^\prime}, c_{k^\prime+\mu_{k^\prime}}$ and the available side-information. 
That is 
\begin{align}
\label{case1}
x_{t,j}=&c_{k^\prime}+c_{k^\prime+\mu_{k^\prime}}+\underbrace{\nu_{k^\prime}+\nu_{k^\prime+\mu_{k^\prime}}}_{\text{side-information~terms}},
\end{align}
where 
\begin{itemize}
\item $\nu_{k^\prime}$ is the exclusive OR of the set of message symbols present in $c_{k^\prime}$ excluding the message symbols $x_{t,j}$ and $x_{(k^\prime+d_{down}(k^\prime))_b,(k^\prime+d_{down}(k^\prime))_r}$
\item $\nu_{k^\prime+\mu_{k^\prime}}$ is the exclusive OR of the set of message symbols present in $c_{k^\prime+\mu_{k^\prime}}$ excluding the message symbol $x_{(k^\prime+d_{down}(k^\prime))_b,(k^\prime+d_{down}(k^\prime))_r}$. 
\end{itemize}

\item [\textbf{(iii)}] If $k \in \tilde{E}_i$ for $i \in [0:\lceil \frac{l}{2}\rceil-1]$, then $R_t$ can decode $x_{t,j}$ by using the broadcast symbols $c_{k^\prime}, c_{k^\prime}+\mu_{k^\prime},c_{k^{\prime}+t_{k^\prime,1}},\ldots,c_{k^\prime+t_{k^\prime,p_{k^\prime}}}$ and the side-information available. That is 
\begin{align}
\label{case2}
\nonumber
x_{t,j}=&c_{k^\prime}+c_{k^\prime+\mu_{k^\prime}}+c_{{k^\prime}+t_{k^\prime,1}}+\ldots+c_{k^\prime+t_{k^\prime,p_{k^\prime}}}\\&
+\underbrace{\nu_{k^\prime}+\nu_{k^\prime+\mu_{k^\prime}}+\nu_{k^\prime+t_{k^\prime,1}}+\ldots+
\nu_{k^\prime+t_{k^\prime,p_{k^\prime}}}}_{\text{side-information~terms}},
\end{align}
where 
\begin{itemize}
\item $\nu_{k^\prime}$ is the exclusive OR of the set of message symbols present in $c_{k^\prime}$ excluding the message symbols $x_{t,j}$ and $x_{(k^\prime+d_{down}(k^\prime))_b,(k^\prime+d_{down}(k^\prime))_r}$
\item $\nu_{k^\prime+t_{k^\prime,\tau}}$ for $\tau \in [1:p_{k^\prime}]$ is the exclusive OR of the set of message symbols present in $c_{k^\prime}+t_{k^\prime,r}$ excluding the message $x_{{(k^\prime+t_{k^\prime,\tau}+d_{down}(k^\prime))}_b,{(k^\prime+t_{k^\prime,\tau}+d_{down}(k^\prime))}_r}$
\item $\nu_{k^\prime+\mu_{k^\prime}}$ is the exclusive OR of the set of message symbols present in $c_{k^\prime}+\mu_{k^\prime}$ excluding the message symbols $x_{{(k^\prime+d_{down}(k^\prime))}_b,{(k^\prime+d_{down}(k^\prime))}_r}$ and $x_{{(k^\prime+t_{k^\prime,\tau}+d_{down}(k^\prime))}_b,{(k^\prime+t_{k^\prime,\tau}+d_{down}(k^\prime))}_r}$ for $\tau \in [1:p_{k^\prime}]$.
\end{itemize}
\item [\textbf{(iv)}] If $k^\prime \in [m-\lambda_l:m-1]$, then $R_t$ can decode $x_{t,j}$ from the broadcast symbol $c_{k^\prime}$ and the side-information available. That is 
\begin{align}
\label{case3}
x_{t,j}=&c_{k^\prime}+\nu_{k^\prime},
\end{align}
where $\nu_{k^\prime}$ is the exclusive OR of the set of message symbols present in $c_{k^\prime}$ excluding the message symbol $x_{t,j}$.
\end{itemize}
\end{theorem}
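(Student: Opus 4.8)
The plan is to reduce the claim to a combinatorial statement about the column supports of the $1$s of the AIR matrix $\mathbf{L}$ of size $m\times n$ (with $m=Kb$, $n=b(D+1)+a$), and then to verify that statement case by case from the block structure of Fig.~\ref{fig1}, using the distance formulas of Lemmas \ref{lemmav1}, \ref{lemmav2} and \ref{lemmav3} together with the single arithmetic input $\lambda_l=\gcd(m,n)=\gcd(Kb,b(D+1)+a)\ge b(U+1)$, which holds exactly because $(a,b)\in\mathbf{S}_{K,D,U}$. For the reduction: since row $s$ of $\mathbf{L}$ carries the coefficients of $x_{s_b,s_r}$, the message symbols unknown to $R_t$ while it decodes $x_{t,j}$ (put $k=bt+j-1$, so $t=k_b$, $j=k_r$) are exactly those indexed by the window of $b(U+D+1)$ consecutive rows $W_k:=[\,b(t-U):b(t+D)+b-1\,]$ (read mod $m$), which contains $k$ with $bU+j-1$ rows above it and $bD+b-j$ below. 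Hence adding the broadcast symbols $c_\ell$, $\ell\in T$, and subtracting the side-information parts (precisely the $\nu$-terms of the statement) returns $x_{t,j}$ if and only if $\sum_{\ell\in T}\mathbf{L}(k,\ell)=1$ and $\sum_{\ell\in T}\mathbf{L}(s,\ell)=0$ for every $s\in W_k\setminus\{k\}$; one also checks along the way that every symbol absorbed into a $\nu$-term belongs to a message outside $W_k$, so $R_t$ can indeed form it. The theorem is then four verifications that the column sum over the prescribed $T$, restricted to the window $W_k$, equals the unit vector $e_k$.

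For the single-symbol cases I would argue directly. In case (i), $k\in[0:m-n-1]$: row $k$ lies in the top stack of copies of $\mathbf{I}_n$, the entry $\mathbf{L}(k,k\bmod n)=1$, the nearest $1$ above it in column $k\bmod n$ (if any) is at distance $n>bU+j-1$, and by Lemma \ref{lemmav1} the cutoff $m-n-1$ is exactly the largest $k$ for which no $1$ of that column lies in $W_k$ below $k$; so that column restricted to $W_k$ is $e_k$ and $T=\{k\bmod n\}$ works. In case (iv), $k\in[m-\lambda_l:m-1]$: $k$ sits in the final submatrix, a horizontal strip of copies of $\mathbf{I}_{\lambda_l}$, and $T=\{k-\lambda_0\}$ selects its right-most $1$; Lemma \ref{lemmav2} gives the up-distance $\lambda_l\ge b(U+1)>bU+j-1$, there is no $1$ below it in its column, so again the restriction to $W_k$ is $e_k$.

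Cases (ii) and (iii) are the substantive ones, treated together. Here $k\in\tilde C_i$, so with $k'=k-\lambda_0\in C_i$ the entry $\mathbf{L}(k',k')=1$ lies on the diagonal of the top block above the $2i$-th submatrix; following column $k'$ down to its bottom-most $1$ reaches row $r_1:=k'+d_{down}(k')$, and from the structure one has $\mathbf{L}(k,k'+\mu_{k'})=1$, i.e., $x_{t,j}$ reappears in column $k'+\mu_{k'}$. The broadcast symbol $c_{k'+\mu_{k'}}$ then contains, besides $x_{t,j}$: the symbol at row $r_1$; when $r_1$ falls in a fat submatrix, the $p_{k'}$ symbols at rows $r_1+t_{k',1},\dots,r_1+t_{k',p_{k'}}$ reached by the further $1$s below $r_1$ in that column; and otherwise only messages already known to $R_t$, since every other $1$ of that column sits more than $b(U+D)$ rows away (again by $\lambda_l\ge b(U+1)$ and $\lambda_l\mid\lambda_{2i}$). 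The right-distance $\mu_{k'}=d_{right}(r_1,k')$ of Lemma \ref{lemmav3} is chosen so that the helper symbol $c_{k'}$ shares exactly the row-$r_1$ symbol with $c_{k'+\mu_{k'}}$ and is otherwise free of $W_k$-rows, and likewise $c_{k'+t_{k',\tau}}$ shares exactly the row-$(r_1+t_{k',\tau})$ symbol. Consequently: if $r_1$ lands in a tall submatrix then $p_{k'}=0$ (remark after Definition \ref{defv1}) and $T=\{k',\,k'+\mu_{k'}\}$ already sums to $e_k$ on $W_k$ — case (ii), $k\in\tilde D_i$; otherwise $T=\{k',\,k'+\mu_{k'},\,k'+t_{k',1},\dots,\,k'+t_{k',p_{k'}}\}$ cancels the $p_{k'}+1$ unwanted symbols one at a time, leaving $e_k$ — case (iii), $k\in\tilde E_i$.

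I expect the bookkeeping in case (iii) to be the main obstacle: one must verify simultaneously that (a) the interferers appearing in $c_{k'+\mu_{k'}}$ are exactly the $p_{k'}+1$ symbols cancelled by the helper columns, (b) no two helper columns share an uncancelled $1$ inside $W_k$, and (c) every remaining $1$ of every column in $T$ corresponds to a message index kept outside $W_k$ by the distance bounds ($\ge\lambda_l=\gcd(m,n)\ge b(U+1)$). Establishing this requires combining all three distance lemmas with the divisibility chain $\lambda_l\mid\lambda_{2i-1},\lambda_{2i}$, and it is precisely where the hypothesis $(a,b)\in\mathbf{S}_{K,D,U}$ is indispensable; every other step reduces to a local inspection of the one submatrix of Fig.~\ref{fig1} containing the relevant $1$.
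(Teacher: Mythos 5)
Your proposal is correct and follows essentially the same route as the paper's Appendix~B proof: the same four-case split ($k\in[0:m-n-1]$, $k\in\tilde D_i$, $k\in\tilde E_i$, $k\in[m-\lambda_l:m-1]$), the same reliance on Lemmas \ref{lemmav1}--\ref{lemmav3} to locate the $1$s, and the same key inequality $\lambda_l=\gcd(m,n)\ge b(U+1)$ coming from $(a,b)\in\mathbf{S}_{K,D,U}$ to push all uncancelled interferers outside the interference window. Your explicit reformulation of decodability as ``the selected columns sum to $e_k$ on the window $W_k=[b(t-U):b(t+D)+b-1]$'' is a cleaner statement of what the paper verifies implicitly, but it is the same argument.
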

\begin{proof}
Proof is given in Appendix B.
\end{proof}
\begin{remark}
\label{rem2}
In $(K,D,U)$ SUICP-SNI, the broadcast symbols used by receiver $R_t$ to decode its wanted message symbols $x_{t,j}$ for $t \in [0:K-1]$ and $j \in [1:b]$ is summarized below. Let $k=bt+j-1$ and $k^{\prime}=k-\lambda_0$. 
\begin{itemize}
\item If $k \in [0:\lambda_0-1]$, then $R_t$ decodes $x_{t,j}$ from $c_{k~\text{mod}~n}$.
\item If $k \in \tilde{D}_i$ for $i \in [0:\left\lceil\frac{l}{2}\right\rceil]$, then $R_t$ decodes $x_{t,j}$ by using two broadcast symbols $c_{k^\prime}$ and $c_{k^\prime+\mu_{k^\prime}}$.
\item If $k \in \tilde{E}_i$ for $i \in [0:\left\lceil\frac{l}{2}\right\rceil-1]$, then $R_k$ decodes $x_{t,j}$ from $c_{k^\prime},c_{k^\prime+\mu_{k^\prime}},c_{k^\prime+t_{k^\prime,r}}$ for $r=1,2,\ldots,p_k$.
\item If $k \in \tilde{E}_i$ for $i=\left\lceil\frac{l}{2}\right\rceil$, then $R_k$ decodes $x_{k,j}$ from $c_{k^\prime}$.
\end{itemize}
\end{remark}

\begin{remark}
\label{rem3}
Let $N_k$ be the number of message symbols present in $c_k$ for $k \in [0:n-1]$. Let $k=bt+j-1$ and $k^{\prime}=k-\lambda_0$. Let the number of side-information used by receiver $R_t$ to decode $x_{t,j}$ be $\gamma_{t,j}$. The value of $\gamma_{t,j}$ is summarized below:
\begin{itemize}
\item If $k \in [0:\lambda_0-1]$, then 
\begin{align*}
\gamma_{t,j}=N_{k~\text{mod}~n}-1.
\end{align*}
\item If $k \in \tilde{D}_i$ for $i \in [0:\left\lceil\frac{l}{2}\right\rceil]$, then 
\begin{align*}
\gamma_{t,j}=N_{k^\prime}+N_{k^\prime+\mu_{k^\prime}}-3,
\end{align*}
where $k^{\prime}=k-\lambda_0$.
\item If $k \in \tilde{E}_i$ for $i \in [0:\left\lceil\frac{l}{2}\right\rceil-1]$, then 
\begin{align*}
\gamma_{t,j}=N_{k^\prime}+N_{k^\prime+\mu_{k^\prime}}+\sum_{j=1}^{p_{k^\prime}} N_{k^\prime+t_{k^\prime,j}}-2p_{k^\prime}-3.
\end{align*}
\item If $k \in \tilde{E}_i$ for $i=\left\lceil\frac{l}{2}\right\rceil$, then  $\gamma_{t,j}=N_{k^\prime}-1$.
\end{itemize}
\end{remark}

\section{Discussion}
\label{sec4}
In this paper, we gave an achievable rate for $(K,D,U)$ SUICP-SNI with arbitrary $K,D$ and $U$ by using AIR matrices. The capacity of $(K,D,U)$ SUICP-SNI is a challenging open problem.

\section*{APPENDIX A}
\subsection*{Proof of Lemma \ref{lemmav1} }
\textit{Case (i)}: $l$ is even and $k \in C_i$ for $i \in [0:\lceil\frac{l}{2}\rceil]$ or  $l$ is odd and $k \in C_i$ for $i \in [0:\lceil\frac{l}{2}\rceil-1]$.

In this case, from the definition of down distance, we have $L(k+d_{down}(k),k) \in \mathbf{I}_{ \lambda_{2i} \times \beta_{2i}\lambda_{2i}}$.
\begin{figure*}[ht]
\centering
\includegraphics[scale=0.67]{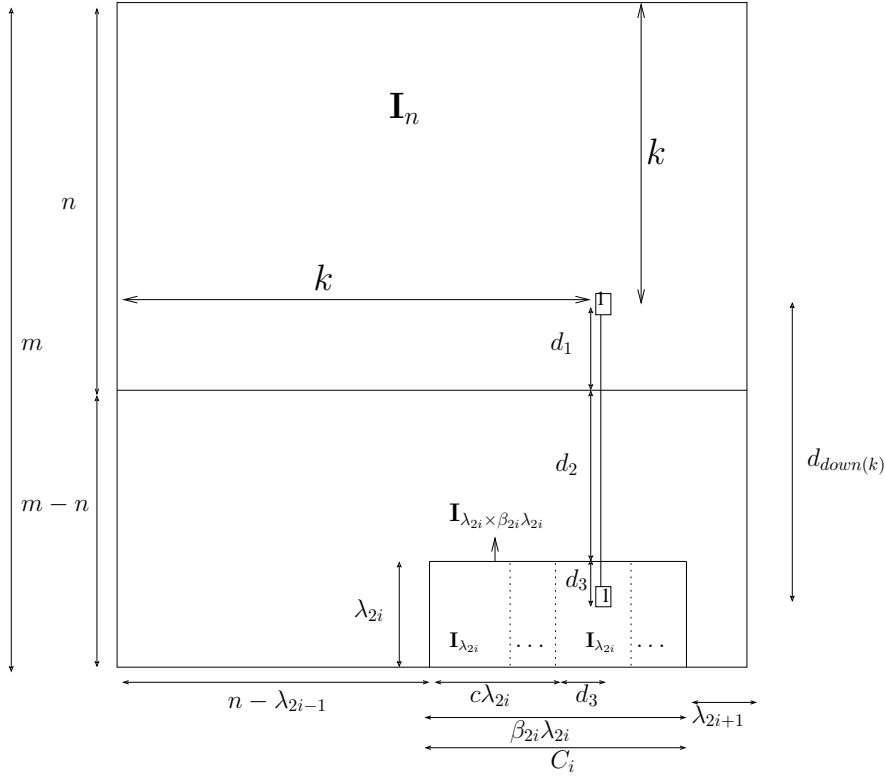}\\
\caption{Maximum-down distance calculation}
\label{afig1}
\end{figure*}
Let $k~\text{mod}~(n-\lambda_{2i-1})=c\lambda_{2i}+d$ for some positive integers $c$ and $d$ $(d<\lambda_{2i})$. 
From Figure \ref{afig1}, we have 
\begin{align}
\label{aeq1}
d_{down}(k)=d_1+d_2+d_3,
\end{align}
and 
\begin{align}
\label{aeq2}
\nonumber
&d_1=n-k, \\&
\nonumber
d_2=m-n-\lambda_{2i}, \\&
d_3=k-(n-\lambda_{2i-1})-c\lambda_{2i}.
\end{align}

By using \eqref{aeq1} and \eqref{aeq2}, we have 
\begin{align}
\label{aeq3}
\nonumber
d_{down}(k)&=d_1+d_2+d_3\\&
\nonumber
=n-k+\underbrace{m-n-\lambda_{2i}}_{d_2}\\&+\underbrace{k-(n-\lambda_{2i-1})-c\lambda_{2i}}_{d_3}\\&
\nonumber
=m-n+\lambda_{2i-1}-(c+1)\lambda_{2i}.
\end{align}

By replacing $\lambda_{2i-1}$ with $\beta_{2i}\lambda_{2i}+\lambda_{2i+1}$ in \eqref{aeq3}, we get 
\begin{align*}
d_{down}(k)=m-n+\lambda_{2i+1}+(\beta_{2i}-1-c)\lambda_{2i}.
\end{align*}
\textit{Case (ii)}: $l$ is odd and $k \in C_{\lceil\frac{l}{2}\rceil}$.

In this case, from the definition of down distance, we have $L(k+d_{down}(k),k) \in \mathbf{I}_{\beta_{l}\lambda_{l} \times \lambda_{l}}$.
\begin{figure*}[ht]
\centering
\includegraphics[scale=0.67]{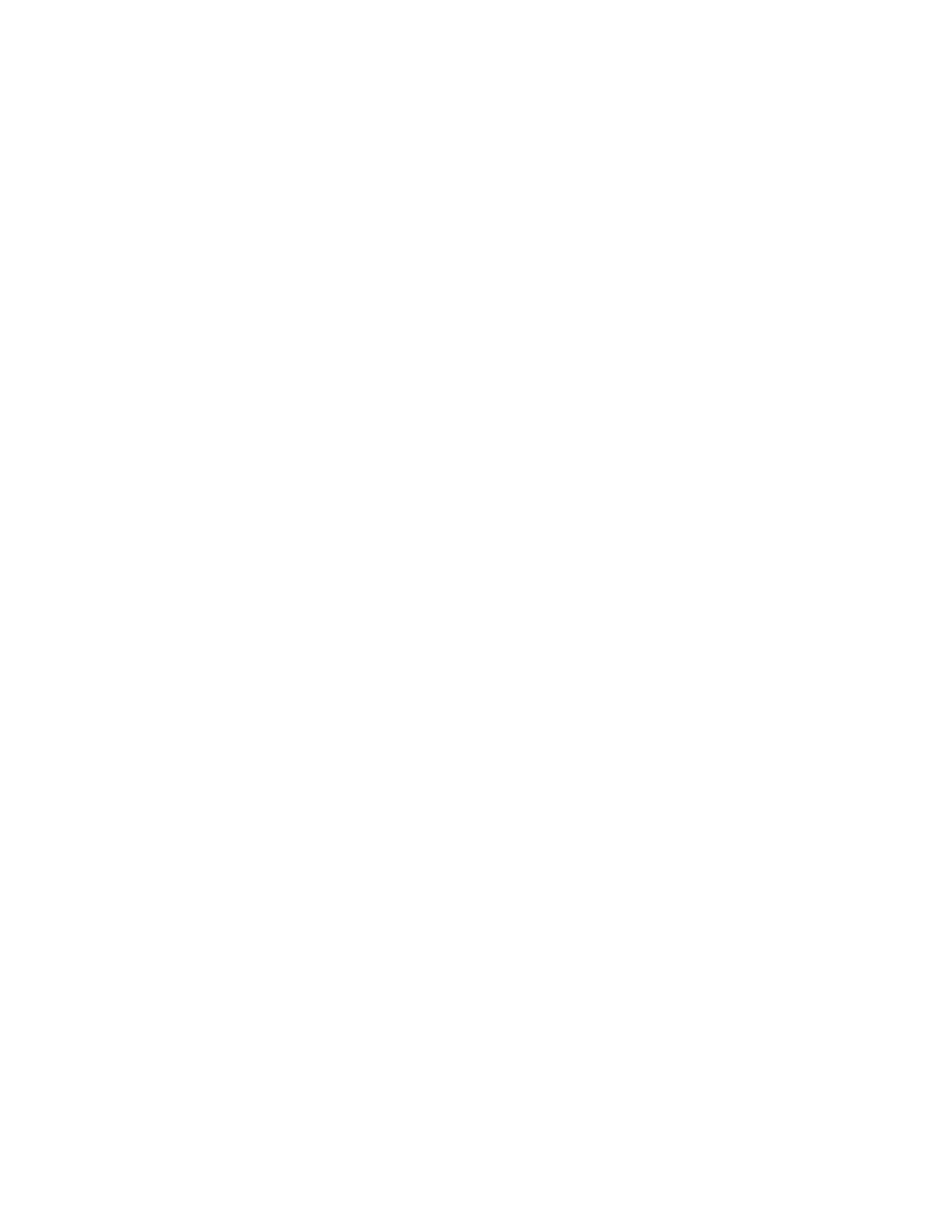}\\
\caption{Maximum-down distance calculation}
\label{afig2}
\end{figure*}
From Figure \ref{afig2}, we have 
\begin{align}
\label{aeq32}
d_{down}(k)=d_1+d_2+d_3,
\end{align}
and
\begin{align}
\label{aeq4}
\nonumber
&d_1=n-k, \\&
\nonumber
d_2=m-n-\beta_l\lambda_{l}, \\&
d_3=\beta_l\lambda_l-d_5.
\end{align}

We have $L(k,k) \in \mathbf{I}_{n}$ and $L(k+d_{down}(k),k) \in \mathbf{I}_{\lambda_l}$ of $\mathbf{I}_{\beta_{l}\lambda_{l} \times \lambda_{l}}$ as shown in Figure \ref{afig2}. Hence, we have $d_1=d_4$ and $d_4=d_5$.
By using \eqref{aeq32} and \eqref{aeq4}, we have 
\begin{align*}
\nonumber
d_{down}(k)&=d_1+d_2+d_3\\&
\nonumber
=d_1+\underbrace{m-n-\beta_l\lambda_{l}}_{d_2}+\underbrace{\beta_l\lambda_l-d_1}_{d_3}=m-n.
\end{align*}

For $i=\left \lceil \frac{l}{2} \right \rceil$, we have $\lambda_{2\left \lceil \frac{i}{2} \right \rceil}=\lambda_{2\left \lceil \frac{i}{2} \right \rceil+1}=0$. We can write $m-n$ as $m-n+\lambda_{2\left \lceil \frac{l}{2} \right \rceil+1}+(\beta_{2\left\lceil\frac{l}{2}\right\rceil}-1-c)\lambda_{2\left \lceil \frac{l}{2} \right \rceil}$. Hence 
\begin{align*}
d_{down}(k)=m-n+\lambda_{2i+1}+(\beta_{2i}-1-c)\lambda_{2i}.
\end{align*}


\section*{APPENDIX B}
It turns out that the interval $\tilde{C}_{i}$ defined in \eqref{fact0} for $i \in [0:\left\lceil\frac{l}{2}\right\rceil]$
needs to be partitioned into two  as $\tilde{C}_{i} =\tilde{D}_{i}\cup \tilde{E}_{i}$ as given below to prove the main result Theorem \ref{thmv1}. Let 
\begin{align}
\label{fact121}
&\tilde{D}_i=[m-\lambda_{2i-1}:m-\lambda_{2i-1}+(\beta_{2i}-1)\lambda_{2i}-1] \\&
\tilde{E}_i=[m-\lambda_{2i-1}+(\beta_{2i}-1)\lambda_{2i}:m-\lambda_{2i+1}-1]. 
\end{align}
for $i \in [0:\left\lceil\frac{l}{2}\right\rceil]$.

\subsection*{Proof of Theorem \ref{thmv1}}
A $b$-dimensional vector linear index code of length $n$ generated by an AIR matrix of size $m \times n$ is given by
\begin{align}
\nonumber
 [c_0~c_1~\ldots~c_{n-1}]&=[x_{0,1}~x_{0,2}~\ldots~x_{0,b}~x_{1,1}~\ldots~x_{K-1,b}]\mathbf{L}\\&
 =\sum_{t=0}^{K-1}\sum_{j=1}^{b}x_{t,j}L_{tb+j-1}.
\end{align}
For $k \in [0:m-1]$, let 
\begin{align*}
k_b=\left \lfloor \frac{k}{b} \right \rfloor~\text{and}~k_r=(k~\text{mod}~b)+1,
\end{align*}
i.e., the $k$th row of the matrix $\mathbf{L}$ contains the coefficients used for mixing the message symbol $x_{k_b,k_r}$ in $n$ code symbols.

Let 
\begin{align*}
k=bt+j-1.
\end{align*}

We prove that for $t \in [0:K-1]$, every receiver $R_{t}$ decodes its wanted message $x_{t,j}$ by using $[c_0~c_1~\ldots~c_{n-1}]$ and its side-information.

{\bf {Case (i)}:} $k \in [0:\lambda_0-1]$

If $m-n < \left\lceil\frac{m}{2}\right \rceil$, the broadcast symbol $c_{k}$ is given by $c_{k}=x_{k_b,k_r}+x_{(k+n)_b,(k+n)_r}$. We have $(k+n)_b=(tb+j-1+b(D+1)+a)_b \geq t+D+1$. Hence, in $c_{k}$, the message symbol $x_{(k+n)_b,(k+n)_r}$ is in the side-information of receiver $R_{t}$. Hence, $R_{t}$ can decode its wanted message symbol $x_{t,j}$ from $c_{k}$. 

If $m-n \geq \left \lceil \frac{m}{2}\right \rceil$, we show that $R_t$ can decode $x_{t,j}$ from $c_{k~\text{mod}~n}$. In this case, from \eqref{chain}, we have $\beta_0=0$ and $\lambda_1=n$. 

If $k \leq n-1$ ($k~\text{mod}~n=k$), From Lemma \ref{lemmav2}, we have 
\begin{align}
\label{fact52}
&d_{up}(k+n,k)=n=b(D+1)+a.
\end{align}

This indicates that the $Db$ message symbols $x_{t+g,h}$ for $g \in [1:D]$,$h \in [1:b]$ and $b-j$ message symbols $x_{t,h}$ for $h \in [j+1:b]$are not present in $c_{k}$. From Lemma \ref{lemmav1}, we have 
\begin{align}
\label{fact521}
\nonumber
d_{down}(k)&=m-n+\lambda_{2i+1}+(\beta_{2i}-1-c)\lambda_{2i}\\&
 \leq m-\lambda_l=m-\text{gcd}(m,n).
\end{align}

This indicates that the $Ub$ message symbols $x_{t-g,h}$ for $g \in [1:U],h \in [1:b]$ and $j-1$ message symbols $x_{t,h}$ for $h \in [1:j-1]$ are not present in $c_{k}$. Hence, every message symbol in $c_{k}$ is in the side-information of $R_t$ excluding the message symbol $x_{t,j}$ and $R_t$ can decode $x_{t,j}$.

If $k \in [n:\lambda_0-1]$, From Lemma \ref{lemmav2}, we have 
\begin{align*}
d_{up}(k,k~\text{mod}~n)&=d_{up}(k+n,k~\text{mod}~n)\\&=n.
\end{align*}

This indicates that the $2Db$ message symbols $x_{t \pm g,h}$ for $g \in [1:D],h \in [1:b]$ and $b-1$ message symbols of $x_{t,h}$ for $h \in [1:b],h \neq j$ are not present in $c_{k~\text{mod}~n}$. Hence, $R_t$ can decode $x_{t,j}$ from $c_{k~\text{mod}~n}$.


{\bf Case (ii):} $k \in \tilde{D_{i}}$  for $i\in [0:\left \lceil \frac{l}{2}\right \rceil]$.

Let $k^{\prime}=k-\lambda_0$. In this case, we have $k^{\prime}_R \in [0:(\beta_{2i}-1)\lambda_{2i}-1]$  for $i\in [0:\left \lceil\frac{l}{2}\right \rceil]$. Let $k^{\prime}_R=c\lambda_{2i}+d$ for some positive integers $c$ and $d$ and $d<\lambda_{2i}$. From Lemma \ref{lemmav3}, we have $\mu_{k^{\prime}}=\lambda_{2i}$, from Definition \ref{defv1}, we have  $t_{k^{\prime},r}=0$ for $r \in [1:p_k]$. From Lemma \ref{lemmav1}, we have 
\begin{align}
\label{fact1}
\nonumber
d_{down}(k^{\prime})&=m-n+\lambda_{2i+1}+(\beta_{2i}-1-c)\lambda_{2i}\\&
=m-n+\lambda_{2i-1}-(c+1)\lambda_{2i}.
\end{align}
From Lemma \ref{lemmav2}, we have 
\begin{align}
\label{fact2}
\nonumber
&d_{up}(k^{\prime}+d_{down}(k^{\prime}),k^{\prime})=\lambda_{2i-1}-c\lambda_{2i}\\& d_{up}(k^{\prime}+d_{down}(k^{\prime}),k^{\prime}+\mu_{k^{\prime}})=\lambda_{2i-1}-(c+1)\lambda_{2i}.
\end{align}
From \eqref{fact1} and \eqref{fact2}
\begin{align}
\label{fact3}
\nonumber
d_{down}(k^{\prime})-d_{up}(k^{\prime}&+d_{down}(k^{\prime}),k^{\prime}+\mu_{k^{\prime}})\\&=m-n=\lambda_0.
\end{align}

This indicates that $x_{t,j}$ is present in the code symbol $c_{k^{\prime}+\mu_{k^{\prime}}}$ and among $(D+1)b+a-1$ message symbols after $x_{t,j}$, only  $x_{(k^{\prime}+d_{down}(k^{\prime}))_b,(k^{\prime}+d_{down}(k^{\prime}))_r}$ is present in $c_{k^{\prime}+\mu_{k^{\prime}}}$. Fig. \ref{sfig31} and \ref{sfig3} illustrate this.
From \eqref{fact2}, 
\begin{align}
\label{fact4}
\nonumber
d_{up}(k^{\prime}&+d_{down}(k^{\prime}),k^{\prime})-d_{up}(k^{\prime}+d_{down}(k^{\prime}),k^{\prime}+\mu_{k^{\prime}})\\&=\lambda_{2i} \geq gcd(m,n) \geq b(U+1).
\end{align}

This along with \eqref{fact3} indicates that every message symbol in $c_{k^{\prime}}$ is in the side-information of $R_{t}$ except $x_{(k^{\prime}+d_{down}(k^{\prime}))_b,(k^{\prime}+d_{down}(k^{\prime}))_r}$. Fig. \ref{sfig31} and \ref{sfig3} illustrate this. Hence, every message symbol in $c_{k^{\prime}}+c_{k^{\prime}+\mu_{k^{\prime}}}$ is in the side-information of $R_{t}$ and $R_{t}$ decodes $x_{t,j}$. 
\begin{figure*}[ht]
\centering
\includegraphics[scale=0.50]{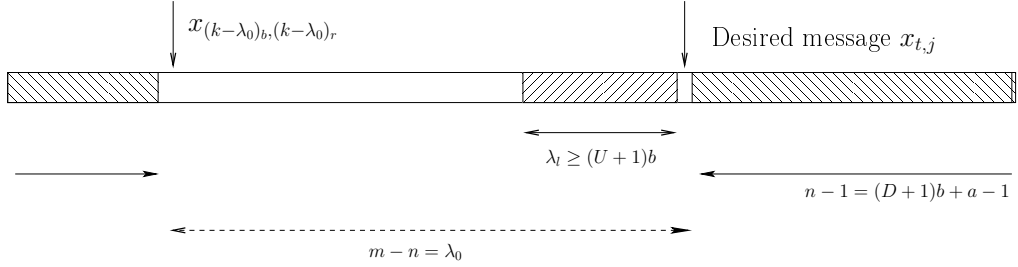}\\
\caption{Decoding for $k=k^{\prime}+\lambda_0 \in \tilde{C}_i$.}
\label{sfig31}
\end{figure*}

\begin{figure}[ht]
\centering
\includegraphics[scale=0.70]{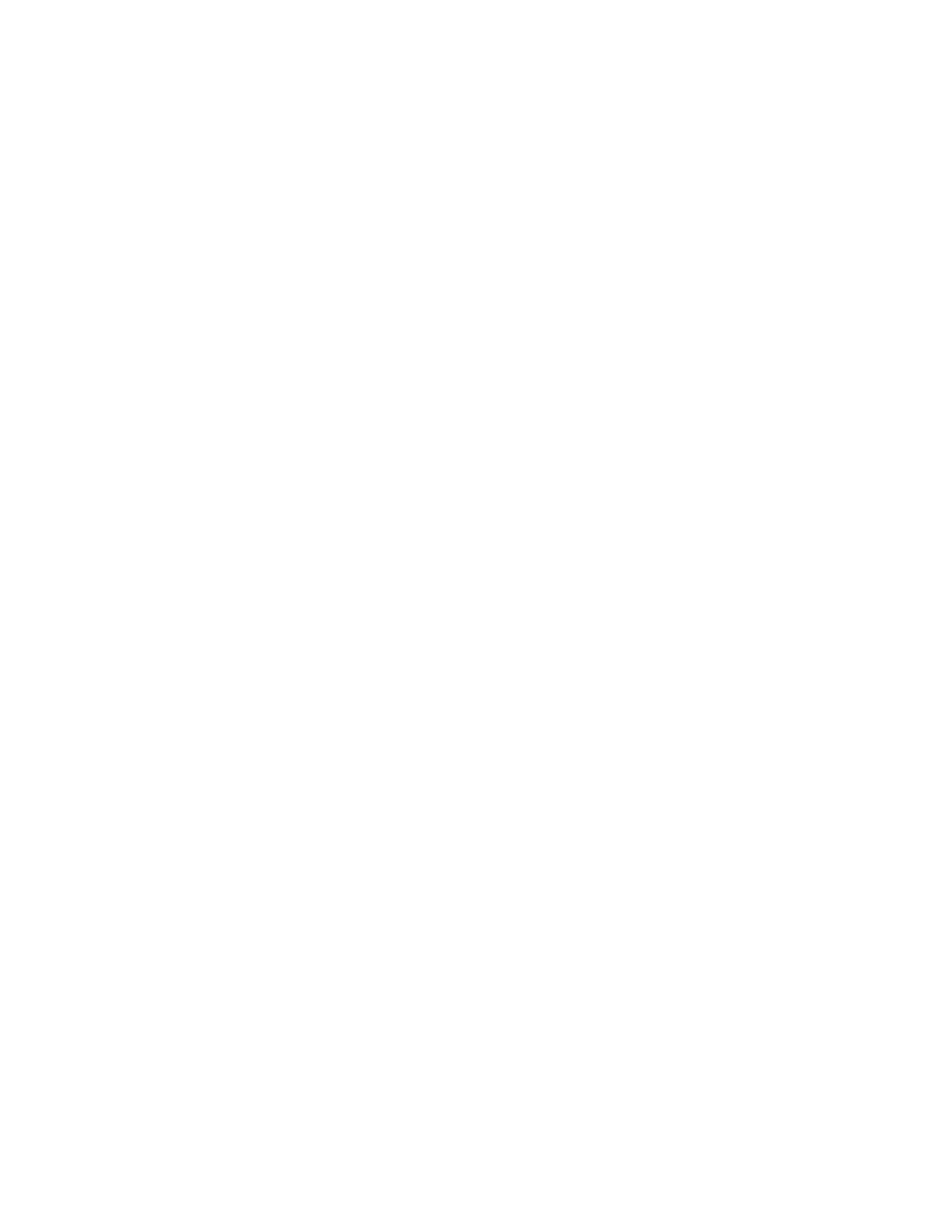}\\
\caption{Decoding for $k=k^{\prime}+\lambda_0 \in \tilde{D}_i$.}
\label{sfig3}
\end{figure}

{\bf Case (iii):} $k \in \tilde{E_{i}}$  for $i\in [0:\lceil\frac{l}{2}\rceil-1]$.

Let $k^{\prime}=k-\lambda_0$. In this case, we have $k^{\prime}_R \in [(\beta_{2i}-1)\lambda_{2i}:\beta_{2i}\lambda_{2i}-1]$ for $i\in [0:\lceil\frac{l}{2}\rceil-1]$. Let $k^{\prime}_R=(\beta_{2i}-1)\lambda_{2i}+c\lambda_{2i+1}+d$ for some positive integers $c,d \ (d<\lambda_{2i+1})$. We have $k^{\prime}=n-\lambda_{2i-1}+k^{\prime}_R$. From  Lemma \ref{lemmav1}, we have 
\begin{align}
\label{fact5}
d_{down}(k^{\prime})=m-n+\lambda_{2i+1}. 
\end{align}
From Lemma \ref{lemmav3}, we have 
\begin{align}
\label{fact6}
\nonumber
\mu_{k^{\prime}}&=d_{right}(k^{\prime}+d_{down}(k^{\prime}),k^{\prime})\\&
\nonumber
=d_{right}(n-\lambda_{2i-1}+k^{\prime}_R\\&
\nonumber 
~~~~~~~~~~~~+m-n+\lambda_{2i+1},k^{\prime})\\&
\nonumber
=d_{right}(m-\lambda_{2i}+c\lambda_{2i+1}+d,k^{\prime})\\&
=\lambda_{2i}-c\lambda_{2i+1}.
\end{align}
From Lemma \ref{lemmav2}, we have
\begin{align}
\label{fact7}
d_{up}(k^{\prime}+d_{down}(k^{\prime}),k^{\prime}+\mu_{k^{\prime}})=\lambda_{2i+1}.
\end{align}
From \eqref{fact1} and \eqref{fact2}
\begin{align}
\label{fact8}
\nonumber
d_{down}(k^{\prime})-d_{up}(k^{\prime}&+d_{down}(k^{\prime}),k^{\prime}+\mu_{k^{\prime}})\\&=m-n.
\end{align}

This indicates that $x_{t,j}$ is present in the code symbol $c_{k^{\prime}+\mu_{k^{\prime}}}$ and among $(D+1)b+a-1$ messages after $x_{t,j}$, the interfering messages $x_{(k^{\prime}+d_{down}(k^{\prime}))_b,(k^{\prime}+d_{down}(k^{\prime}))_r}$ and $x_{(k^{\prime}+t_{k^{\prime},r}+d_{down}(k^{\prime}))_b,(k^{\prime}+t_{k^{\prime},r}+d_{down}(k^{\prime}))_r}$ for $\tau \in [1:p_{k^{\prime}}]$ are present in $c_{k^{\prime}+\mu_{k^{\prime}}}$. Fig. \ref{sfig4} is useful to understand this.

From Lemma \ref{lemmav1} and Definition \ref{defv1}, $k^{\prime}+d_{down}(k^{\prime})+t_{k^{\prime},p_{k^{\prime}}}$ is always less than the number of rows in the matrix $\mathbf{L}$. That is, $k^{\prime}+t_{k^{\prime},p_{k^\prime}}+d_{down}(k^{\prime})<m$. Hence, we have 
\begin{align}
\label{fact9}
\nonumber
t_{k^{\prime},p_{k^\prime}}&< m-k^{\prime}-d_{down}(k^{\prime})\\&
\nonumber
=m-(n-\lambda_{2i-1}+(\beta_{2i}-1)\lambda_{2i}+c\lambda_{2i+1}+d)-\\&
\nonumber
~~~~~~~~~~~~~~~~~~~~~~~~~~~~~~~~(m-n+\lambda_{2i+1})\\&
=\lambda_{2i}-c\lambda_{2i+1}-d
\end{align}
From \eqref{fact6} and \eqref{fact9}
\begin{align}
\label{relation2}
t_{k^{\prime},p_{k^{\prime}}} < \mu_{k^{\prime}}-d.
\end{align}
From \eqref{fact9}, we have 
\begin{align*}
k^{\prime}_R+t_{k^{\prime},p_{k^{\prime}}}&<k^{\prime}_R+\lambda_{2i}-c\lambda_{2i+1}-d\\&=\underbrace{(\beta_{2i}-1)\lambda_{2i}+c\lambda_{2i+1}+d}_{k^{\prime}_R}+\lambda_{2i}-c\lambda_{2i+1}-d\\&=\beta_{2i}\lambda_{2i}.
\end{align*}
Hence, 
\begin{align*}
k^{\prime}_R+t_{k^{\prime},p_{k^{\prime}}} \in [(\beta_{2i}-1)\lambda_{2i}:\beta_{2i}\lambda_{2i}-1]
\end{align*}
and
\begin{align}
\label{relation3}
\mathbf{L}(k^{\prime}+t_{k^{\prime},r}+d_{down}(k^{\prime}+t_{k^{\prime},r}),k^{\prime}+t_{k^{\prime},r}) \in \mathbf{I}_{\lambda_{2i} \times  \beta_{2i} \lambda_{2i}}
\end{align}
for $r \in [1:p_{k^{\prime}}]$.
We have  
\begin{align}
d_{down}(k^{\prime})=d_{down}(k^{\prime}+t_{k^{\prime},r})
\end{align}
for $r \in [1:p_{k^{\prime}}]$. 

From Lemma \ref{lemmav2}, for $\mathbf{L}(k^{\prime}+t_{k^{\prime},r}+d_{down}(k^{\prime}+t_{k,r}),k^{\prime}+t_{k^{\prime},r})$ for $i\in [0:\lceil\frac{l}{2}\rceil]$,
\begin{align}
\label{fact10}
d_{up}(k^{\prime}+t_{k^{\prime},r}+d_{down}(k^{\prime}+t_{k^{\prime},r}),k^{\prime}+t_{k^{\prime},r})=\lambda_{2i}+\lambda_{2i+1}.
\end{align}
From \eqref{fact7} and \eqref{fact10}, we have 
\begin{align}
\label{fact15}
\nonumber
&d_{up}(k^{\prime}+t_{k^{\prime},r}+d_{down}(k^{\prime}+t_{k,r}),k^{\prime}+t_{k^{\prime},r})-\\&
\nonumber
d_{up}(k^{\prime}+d_{down}(k^{\prime}),k^{\prime}+\mu_{k^{\prime}})=\lambda_{2i}\geq ~gcd(m,n) \\& 
~~~~~~~~~~~~~~~~~~~~~~~~~~~~~~~~~~~~\geq b(U+1).
\end{align}

This along with \eqref{fact8} indicates that every message symbol in $c_{k^{\prime}+t_{k^{\prime},r}}$ is in the side-information of $R_{t}$ except $x_{(k^{\prime}+t_{k^{\prime},r}+d_{down}(k^{\prime}))_b,(k^{\prime}+t_{k^{\prime},r}+d_{down}(k^{\prime}))_r}$. Fig. \ref{sfig4} is useful to understand this. We have $k^{\prime}_R \in [(\beta_{2i}-1)\lambda_{2i}:\beta_{2i}\lambda_{2i}-1]$, $d_{up}(k^{\prime}+d_{down}(k^{\prime}),k^{\prime})=\lambda_{2i}+\lambda_{2i+1}$. From \eqref{fact10}, we have
\begin{align}
\label{fact11}
\nonumber
d_{up}(k^{\prime}&+d_{down}(k^{\prime}),k^{\prime})-d_{up}(k^{\prime}+d_{down}(k^{\prime}),k^{\prime}+\mu_{k^{\prime}})\\&=\lambda_{2i}\geq~gcd(m,n)\geq b(U+1).
\end{align}

This along with \eqref{fact8} indicates that every message symbol in $c_{k^{\prime}}$ is in the side-information of $R_{t}$ except $x_{(k^{\prime}+d_{down}(k^{\prime}))_b,(k^{\prime}+d_{down}(k^{\prime}))_r}$. 

From \eqref{fact8},\eqref{fact15} and \eqref{fact11}, the interfering message symbol $x_{(k^{\prime}+t_{k^{\prime},r}+d_{down}(k^{\prime}))_b,(k^{\prime}+t_{k^{\prime},r}+d_{down}(k^{\prime}))_r}$ in $c_{k^{\prime}+\mu_{k^{\prime}}}$ can be canceled by adding the index code symbol $c_{k^{\prime}+t_{k^{\prime},r}}$  for $r \in [1:p_k]$ and the interfering message symbol $x_{(k^{\prime}+d_{down}(k^{\prime}))_b,(k^{\prime}+d_{down}(k^{\prime}))_r}$ in $c_{k^{\prime}+\mu_{k^{\prime}}}$ can be canceled by adding the index code symbol $c_{k^{\prime}}$. 

Hence, receiver $R_k$ decodes the message symbol $x_k$ by adding the index code symbols $c_{k^{\prime}},c_{k^{\prime}+\mu_{k^{\prime}}}$ and  $c_{k^{\prime}+t_{k^{\prime},r}}$ for $r \in [1:p_{k^{\prime}}]$. 
\begin{figure}[ht]
\centering
\includegraphics[scale=0.60]{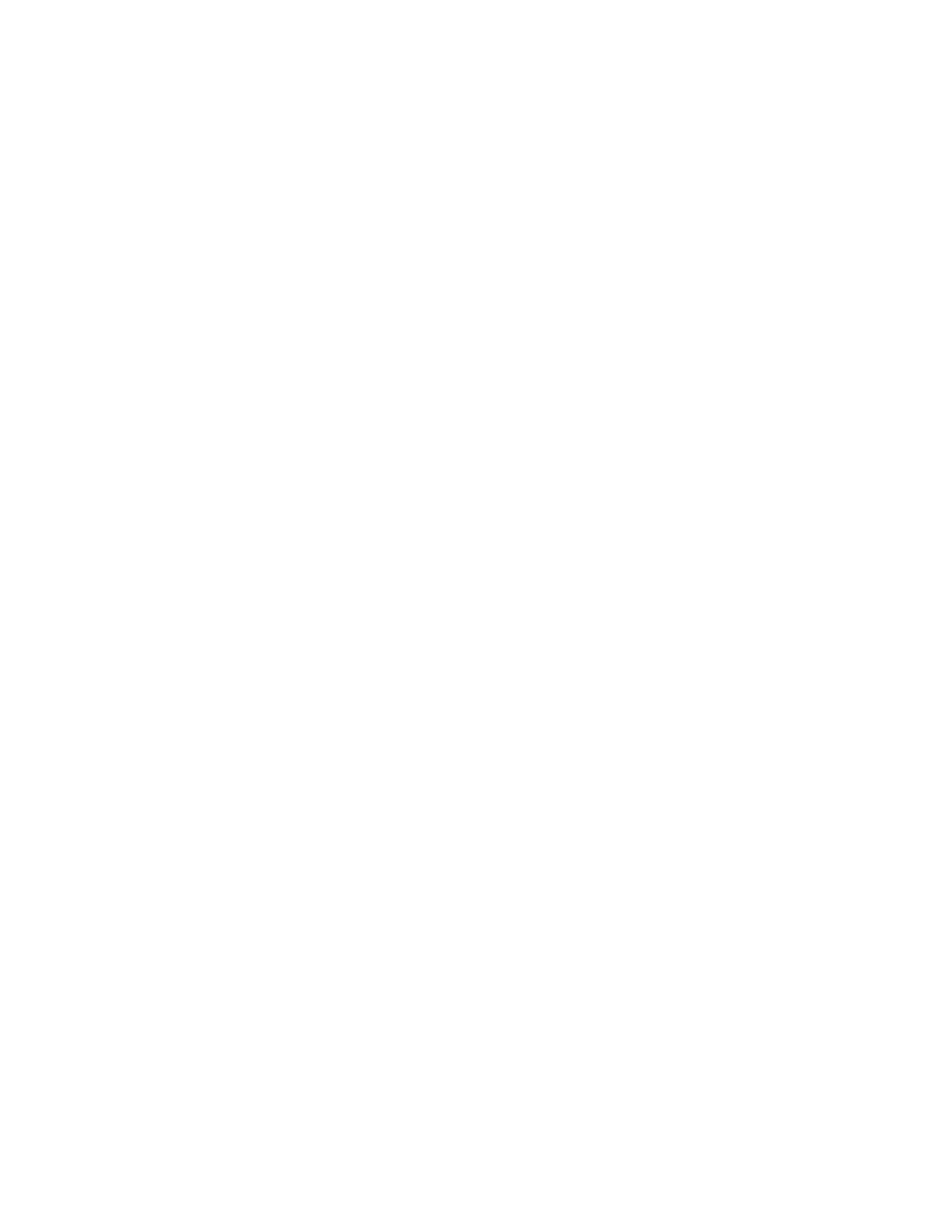}\\
\caption{Decoding for $k=k^{\prime}+\lambda_0 \in \tilde{E}_i$.}
\label{sfig4}
\end{figure}

{\bf Case (iv):} $k \in [m-\lambda_l:m-1]= \tilde{E}_{i}$  for $i=\left\lceil\frac{l}{2}\right\rceil$.
Let $k^{\prime}=k-\lambda_0$.
In this case, from Lemma \ref{lemmav1}, we have 
\begin{align}
\label{fact51}
d_{down}(k^{\prime})=m-n=\lambda_0.
\end{align}

This indicates that $x_{t,j}$ is present in $c_{k^{\prime}}$ and the $Db$ message symbols $x_{t+g,h}$ for $g \in [1:D],h \in [1:b]$ and $b-j$ message symbols $x_{t,h}$ for $h \in [j+1:b]$ are not present in $c_{k^{\prime}}$. From Lemma \ref{lemmav3}, we have 
\begin{align*}
d_{up}(k)=\lambda_l=\text{gcd}(m,n) \geq b(U+1).
\end{align*}

This indicates that the $Ub$ message symbols $x_{t-g,h}$ for $g \in [1:U],h \in [1:b]$ and $j-1$ message symbols $x_{t,h}$ for $h \in [1:j-1]$ are not present in $c_{k^{\prime}}$. Hence, every message symbol in $c_{k^{\prime}}$ is in the side-information of $R_t$ excluding the message symbol $x_{t,j}$ and $R_t$ can decode $x_{t,j}$. From \eqref{fact0} and \eqref{fact12}, case (i), case (ii), case (iii) and case(iv) span $k \in [0:m-1]$. This completes the proof.
\section*{Acknowledgment}
This work was supported partly by the Science and Engineering Research Board (SERB) of Department of Science and Technology (DST), Government of India, through J.C. Bose National Fellowship to B. Sundar Rajan

\end{document}